\theoremstyle{definition} 
\theoremstyle{plain}      
\theoremstyle{plain}      
\theoremstyle{plain}      
\theoremstyle{plain}      
\theoremstyle{plain}      \newtheorem{proposition}{Proposition}
\theoremstyle{plain}      \newtheorem{theorem}{Theorem}
\theoremstyle{remark}     \newtheorem{remark}{Remark}
\theoremstyle{remark}     
\theoremstyle{plain}
\title[Constant-length random substitutions]{Constant-length random 
substitutions and Gibbs measures}
\author{C. Maldonado, L. Trejo-Valencia and E. Ugalde}
\address{Instituto de F\'isica, Universidad Aut\'onoma de San Luis Potos\'i, 
Avenida Manuel Nava 6, Zona Universitaria, 78290 San Luis Potos\'\i , M\'exico.}
\email{liliana@ifisica.uaslp.mx, ugalde@ifisica.uaslp.mx}
\address{Divisi\'on de Matem\'aticas Aplicadas, Instituto Potosino de Investigaci\'on
Cient\'ifica y Tecnol\'ogica, Camino a la Presa San Jos\'e 2055, Lomas 4 secci\'on, 
78216 San Luis Potos\'\i , M\'exico.} 
\email{cesar.maldonado@ipicyt.edu.mx}
\date{}                                         
\begin{document}

\begin{abstract}
This work is devoted to the study of processes generated by random 
substitutions over a finite alphabet. We prove, under mild conditions 
on the substitution's rule, the existence of a unique process which remains 
invariant under the substitution, and exhibiting polynomial decay of correlations. 
For constant-length substitutions, we go further by proving  
that the invariant state is precisely a Gibbs measure which can be obtained
as the projective limit of its natural Markovian approximations. We close the paper 
with a class of substitutions whose invariant state is the unique Gibbs measure
for a hierarchical two-body interaction.
\end{abstract}

\maketitle

\bigskip

\section{Introduction.}

\subsection{} The systems we will consider were introduced to describe the 
evolution of genome sequences, and in general are aimed at explain the pattern 
correlation which can be observed in real genome sequences. 
One of the first examples of this was proposed by 
W. Li~\cite{1989Li} as a simple model exhibiting some spatial 
scaling properties. It was subsequently used to understand the scaling properties 
and the long-range correlations found in real DNA 
sequences~\cite{1995BuldyrevEtAl,1991Li,1992LiKaneko,1997Li,1992PengEtAl}. 
In~\cite{1989GodrecheLuck}, Godr\`eche and J. M. Luck studied similar 
systems used to generate 
quasi-periodic structures by means of random inflation rules.  
From the mathematical point of view, Li's model belongs to the class of random 
substitution, which has attracted some attention in recent years 
(see~\cite{2012Koslicki} and references therein), but whose origin can be traced back
at least until Peyri\`ere's paper~\cite{1981Peyriere}. Peyri\`ere's and Koslicki's 
random substitutions constitute a class of 
Markov processes on a countable set consisting in finite strings over a fixed finite 
alphabet.
The random substitution acts by replacing letters by words, according to a certain 
stochastic rule, and the mathematical questions concern the asymptotic behavior of the
system. Our notion of random substitution
is equivalent to Peyri\`ere's and Koslicki's, with the only difference that we place
ourselves from the very beginning, in the framework of infinite sequences. Hence, 
instead of a Markov chain over a countable set, we deal with a Markov process over the 
set of 
infinite sequences. This is the approach used by Malychev~\cite{1998Malyshev}
and Rocha and coauthors~\cite{2011Rocha&al}. 
In a previous work~\cite{2013SalgadoUgalde}, we studied Li's model 
from this point of view, and we proved the existence of an invariant state exhibiting 
polynomial decay of correlation.

\medskip \noindent \subsection{}  
Loosely speaking, a random substitution can be described as follows. We are 
given a finite set of symbols $A$,  or alphabet, and a collection 
$S:=\{\sigma_1,\sigma_2\ldots,\sigma_m\}$ 
of substitutions,  {\it i.e.} of functions 
$\sigma:A\to A^+:=\cup_{n\in{\mathbb N}} A^n$ 
replacing symbols of the alphabet by finite strings. We extend this action 
coordinate-wise to the set $A^{\mathbb N}$ of infinite strings from $A$. 
Hence, from randomly chosen sequences 
$x_1x_2\cdots x_n\cdots \in A^{\mathbb N}$ and $s_1s_2\cdots s_n\cdots \in 
S^{\mathbb N}$, 
we obtain the sequence $s_1(x_1)s_2(x_2)\cdots s_n(x_n)\cdots $ by concatenation 
of the words $s_i(x_i)\in A^+$ with $i\in {\mathbb N}$. 
By iterating this procedure, we obtain a random sequence of strings in 
$A^{\mathbb N}$, which is supposed to converge in a yet to specify probabilistic 
sense. 

\medskip \noindent \subsection{} 
Our goal in this paper is threefold. First, we aim to establish general and easily
verifiable conditions on the random substitution, ensuring the existence and uniqueness
of an invariant state. We pretend also to establish general conditions for the invariant 
state to exhibit the polynomial decay of correlations. Finally, we want to characterize the 
invariant state from the point of view of the thermodynamic formalism, furnishing a 
description of it in terms of an interaction potential. 
To these aims, we organized the paper as follows: 
In Section~\ref{sec:generalities} we set up 
the mathematical framework, fix the notations, and prove some basic general results 
concerning random substitutions. Section~\ref{sec:constant-length} is devoted to our 
main 
results, which concern
the constant-length case. There we prove that the unique invariant state is a Gibbs 
measure, 
which is the projective limit of a sequence of Markovian approximations.
In Section~\ref{sec:two-bodyinteractions} we examine a class of random substitutions 
whose invariant
state is the unique Gibbs measure for a hierarchical two-body interaction potential. 
We finish with some closing remarks and comments.

\bigskip

\section{Generalities}\label{sec:generalities}  
\subsection{}
Let $A$ be a finite set, with the discrete topology and let us supply 
$A^{\mathbb N}$ with the corresponding product topology and Borel sigma-algebra. 
We will consider the convex set ${\mathcal M}(A^{\mathbb N})$ of all 
Borel probability measures and its proper subset ${\mathcal M}^+(A^{\mathbb N})$
containing all the probability measures with full support, {\it i. e.}, $\mu(B)>0$ 
for each open ball $B\subset A^{\mathbb N}$.  
To each $n\in {\mathbb N}$ and every finite strings $a:=a_1a_2\cdots a_{N}\in A^+$ 
we associate the cylinder set, 
$[a]_n=\{x\in A^{\mathbb N}:\ x_n x_{n+1}\ldots x_{n+N-1}=a\}$. To simplify the 
notations, we will use $[a]$ instead of $[a]_1$ unless it is necessary to specify. 
For each infinite string $x\in A^{\mathbb N}$ and integers $1\leq \ell\leq  m$, 
we will denote the finite substring $x_\ell x_{\ell+1}\cdots x_{m}$ by 
$x_\ell^m$. 
Similarly, we will denote by $\mu_\ell^m$ the marginal of the measure 
$\mu\in{\mathcal M}\left(A^{\mathbb N}\right)$ corresponding to the coordinates
$\ell,\ell+1,\ldots,n$.  

\medskip\noindent
We will consider two different metrics on ${\mathcal M}(A^{\mathbb N})$, one 
compatible with the vague topology and another one, generating a finer topology. 
The vague topology is generated by the distance 
\[
D(\mu,\mu'):=\sum_{N\in{\mathbb N}} 2^{-N}\sum_{a\in A^N}|\mu[a]-\mu'[a]|.
\]
A finer topology is obtained from the {\bf\em projective distance}
$\rho: {\mathcal M}^+(A^{\mathbb N})\times{\mathcal M}^+(A^{\mathbb N})\to [0,1]$ 
given by
\[
\rho(\mu,\mu')=
\sup_{N\in{\mathbb N}}\max_{a\in A^N}
           \frac{1}{N}\left|\log\frac{\mu[a]}{\mu'[a]}\right|.
\]

\medskip\noindent 
\begin{remark}
In~\cite{2015TrejoUgalde} we studied the salient features of the topology generated 
by the projective distance. There it was proved that ${\mathcal M}(A^{\mathbb N})$
is complete and non-separable with respect to $\rho$, so that the topology generated
by $\rho$ is strictly finer than the vague topology.
\end{remark}

\medskip\noindent 
\subsection{} 
A substitution is any map $ \sigma: A\to A^+$ replacing a symbol by a finite string. 
A string of substitutions $s_1s_2\ldots s_N$ defines a map $s:A^N\to A^+$ by 
concatenation of the images of each individual substitution, {\it i. e.}, 
$s(a)=s_1(a_1)s_2(a_2)\cdots s_N(a_N)$, for each $a\in A^N$. 

\medskip \noindent 
The {\bf\em minimal length} of the substitution $S$ is the integer
$\ell_S:=
\min\{\ell\in {\mathbb N}:\, \cup_{\sigma\in S} \sigma(A)\cap A^\ell\neq \emptyset\}$.  
We similarly define $L_S$ to be the {\bf\em maximal length} of the substitution $S$. 
In the case $\ell_S=L_S$ we have a {\bf\em constant-length} substitution. 

\medskip \noindent Given a finite collection 
$S$ of substitutions, we consider the product space $S^{\mathbb N}$ of infinite 
strings of substitutions supplied with the product topology and corresponding 
sigma-algebra. Consider $\nu\in {\mathcal M}^+(S^{\mathbb N})$, in the set 
of fully supported probability measures, and define the transformation 
${\mathbb S}_{\nu}:
{\mathcal M}^+(A^{\mathbb N})\to{\mathcal M}^+(A^{\mathbb N})$
such that
\begin{equation}\label{eq:substitution}
{\mathbb S}_{\nu}\mu[a]=\sum_{s(b)\sqsupseteq a} \nu[s]\mu[b],
\end{equation} 
for each $a\in A^N$. The sum in the right-hand side of the equation runs over all 
strings $s\in S^N$ and $b\in A^N$ such that $s(b)_1^N=a$, which we denote by 
$s(b)\sqsupseteq a$. The transformation ${\mathbb S}_{\nu}$ is the {\bf\em random 
substitution} defined by $\nu$. 

\medskip \noindent An {\bf\em invariant state} for the substitution 
${\mathbb S}_\nu$ is any measure $\mu\in {\mathcal M}^+(A^{\mathbb N})$ such that 
${\mathbb S}_{\nu}(\mu)=\mu$. 
Schauder-Tychonoff Theorem ensures the existence of invariant states. We will be 
interested in the contractive case, for which there exists a unique invariant state 
$\mu_\nu\in {\mathcal M}(A^{\mathbb N})$ such that 
$\mu_\nu=\lim_{n\to \infty} {\mathbb S}_\nu^{\circ n}\mu$ for each 
$\mu\in {\mathcal M}(A^{\mathbb N})$.
We will consider convergence in both $D$ and $\rho$ distances. 

\medskip\noindent
\subsection{}
We will say that the finite collection $S$ of substitutions on $A$ is 
{\bf\em primitive} if for each $N\in {\mathbb N}$ there exists 
$n_N\in{\mathbb N}$ such that for each $n\geq n_N$ and $a,b\in A^N$ there
exists a sequence of substitution strings $s^{(1)},s^{(2)},\ldots s^{(n)}\in S^N$
such that $s^{(n)}\circ s^{(n-1)}\circ\cdots\circ s^{(1)}(b)\sqsupseteq a$. 
Primitive substitutions are well behaved in the sense that when $S$ is primitive, 
random substitutions defined by a fully supported measure 
$\nu\in {\mathcal M}^+(S^{\mathbb N})$ have a unique 
${\mathbb S}_{\nu}$-invariant state, i. e., a unique probability measure 
$\mu_\nu\in {\mathcal M}^+(A^{\mathbb N})$ such that 
${\mathbb S}_{\nu}(\mu_\nu)=\mu_\nu$. To be more precise, we have the following.

\medskip\noindent
\begin{theorem}[{\bf Convergence}]\label{teo:vagueconvergence}
Let $S=\{\sigma_i:\ A\to A^+,\ i=1,\ldots,m\}$ be primitive. Then, for each 
fully supported measure $\nu\in {\mathcal M}^+(S^{\mathbb N})$, there exists a 
unique invariant state $\mu_\nu\in {\mathcal M}^+(A^{\mathbb N})$, 
such that for each $\mu\in {\mathcal M}(A^{\mathbb N})$, 
$\lim_{n\to\infty} {\mathbb S}_\nu^{\circ n}\mu=\mu_\nu$ in the vague topology.
\end{theorem}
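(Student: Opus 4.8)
The plan is to exploit the fact that $\mathbb{S}_\nu$ acts \emph{level by level} as a fixed nonnegative matrix, and then to invoke the Perron--Frobenius theorem on each level. First I would observe that formula~\eqref{eq:substitution} depends only on the length-$N$ cylinder values of $\mu$: for $a,b\in A^N$ set
\[
M^{(N)}_{a,b}:=\sum_{s\in S^N:\,s(b)\sqsupseteq a}\nu[s],
\]
so that $\mathbb{S}_\nu\mu[a]=\sum_{b\in A^N}M^{(N)}_{a,b}\,\mu[b]$. Since $\sum_{a\in A^N}M^{(N)}_{a,b}=\sum_{s\in S^N}\nu[s]=1$, each $M^{(N)}$ is column-stochastic, and the same formula extends $\mathbb{S}_\nu$ to a continuous affine self-map of the compact metrizable space $(\mathcal{M}(A^{\mathbb N}),D)$. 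A short induction on $n$, using that a string in $S^N$ reads only the first $N$ symbols of its argument, shows that $\mathbb{S}_\nu^{\circ n}$ acts on level $N$ precisely as the matrix power $(M^{(N)})^n$. Since convergence in the vague topology is by definition the convergence of $\mu_n[a]$ for every finite word $a$, it then suffices to control $(M^{(N)})^n$ for each fixed $N$.

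Second, I would translate the combinatorial hypothesis into a spectral one. Tracking the chain of intermediate words shows $\left((M^{(N)})^n\right)_{a,b}>0$ if and only if there exist $s^{(1)},\dots,s^{(n)}\in S^N$ with $s^{(n)}\circ\cdots\circ s^{(1)}(b)\sqsupseteq a$; here full support of $\nu$ is what guarantees that every admissible one-step weight $M^{(N)}_{c,c'}$ is strictly positive. Primitivity of $S$ then furnishes an integer $n_N$ such that $(M^{(N)})^{n}$ has all entries strictly positive for every $n\ge n_N$, i.e.\ $M^{(N)}$ is a primitive column-stochastic matrix. By the Perron--Frobenius theorem each $M^{(N)}$ has a unique strictly positive stationary probability vector $\pi^{(N)}$, and $(M^{(N)})^n p\to\pi^{(N)}$ for every probability vector $p$ on $A^N$.

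Third, I would assemble the $\pi^{(N)}$ into the invariant state. Because marginalization intertwines the operators at two consecutive levels (this is exactly the content of $\mathbb{S}_\nu$ being well defined on measures), the projection of $\pi^{(N+1)}$ onto level $N$ is again stationary for $M^{(N)}$, so uniqueness of the stationary vector forces the family $\{\pi^{(N)}\}_N$ to be Kolmogorov-consistent; let $\mu_\nu$ be the measure it determines. Strict positivity of each $\pi^{(N)}$ gives $\mu_\nu\in\mathcal{M}^+(A^{\mathbb N})$, and $\mathbb{S}_\nu\mu_\nu=\mu_\nu$ holds level-wise by construction. Uniqueness is immediate, since any invariant state restricted to level $N$ is a stationary vector for $M^{(N)}$ and hence equals $\pi^{(N)}$. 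Finally, for arbitrary $\mu\in\mathcal{M}(A^{\mathbb N})$ and every word $a\in A^N$ we obtain $\mathbb{S}_\nu^{\circ n}\mu[a]=\left((M^{(N)})^n\,\mu_1^N\right)_a\to\pi^{(N)}_a=\mu_\nu[a]$, which is precisely vague convergence $\mathbb{S}_\nu^{\circ n}\mu\to\mu_\nu$.

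I expect the main obstacle to be the bookkeeping in the first two steps rather than any deep analytic difficulty. One must verify that $\mathbb{S}_\nu$ is genuinely self-contained on each level, so that $M^{(N)}$ is well defined and column-stochastic, and, above all, that the positivity pattern of the matrix powers coincides \emph{exactly} with the composition condition defining primitivity. This requires being attentive to the varying lengths of the words $s^{(i)}(\cdot)$ and to the convention that a string in $S^N$ acts on the first $N$ symbols of its input. Once this dictionary between the substitution combinatorics and the matrices $M^{(N)}$ is in place, the remaining arguments are the standard Perron--Frobenius and Kolmogorov-consistency routines.
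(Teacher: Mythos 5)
Your proposal is correct and follows essentially the same route as the paper's proof: level-$N$ transition matrices $M_N$, primitivity of $M_N$ from primitivity of $S$ plus full support of $\nu$, Perron--Frobenius on each level, and Kolmogorov consistency to assemble the limiting marginals into $\mu_\nu$. The only cosmetic difference is in the consistency step, where you deduce compatibility of the stationary vectors from uniqueness (the marginal of a stationary vector is stationary), whereas the paper obtains it by passing to the limit in the marginal-compatibility identity for the iterates ${\mathbb S}_\nu^{\circ n}\mu$; both arguments carry the same content.
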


\medskip\noindent
\begin{remark}
It is worth noticing Sinai's work on self-similar distributions~\cite{{1975Sinai}}, 
which can be related to our substitution-invariant states. 
Self-similar distributions are the invariant states to the action of 
a semigroup of transformations involving averaging over block and 
normalization of the lattice over those blocks, which in some sense is 
the oposite of a random substitution.
\end{remark}

\medskip\noindent
\begin{proof}
The result follows, almost straightforwardly, from Perron-Frobenius theorem for 
primitive matrices and Kolmogorov's consistency theorem. 

\noindent Fix $N\in {\mathbb N}$ and let $M_N:A^N\times A^N\to [0,1]$ denote 
the probability transition matrix 
\[M_N(a,b)=\sum_{s(b)\sqsupseteq a} \nu[s].\]
According to Eq.~\eqref{eq:substitution}, for each 
$\mu\in{\mathcal M}(A^{\mathbb N})$ we have 
$({\mathbb S}_{\nu}\mu)_1^N = M_N \mu_1^N$, where $\mu_1^N$ denotes the marginal of 
$\mu\in {\mathcal M}(A^{\mathbb N})$ on the first $N$ coordinates. 
Since the collection $S$ is
primitive and $\nu$ is fully supported, then the matrix $M_N$ is primitive as well. 
Therefore, Perron-Frobenius applies, ensuring the existence of a unique probability 
vector $v_N\in (0,1)^{A^N}$ such that
\[
\lim_{N\to\infty}({\mathbb S}_\nu^{\circ n}\mu)_1^N\equiv
\lim_{n\to\infty}(M_N)^n\mu_1^N=v_N,
\]
for each $\mu\in{\mathcal M}(A^{\mathbb N})$.  

\noindent Fix $n\in {\mathbb N}$ and let $\mu^{(n)}:={\mathbb S}_\nu^{\circ n}\mu$, 
then 
\begin{eqnarray*}
\sum_{a_{N+1}}\left(\mu^{(n+1)}\right)_1^{N+1}(a)&=& 
                \sum_{a_{N+1}}\sum_{s(b)\sqsupseteq a} \nu[s]\mu^{(n)}[b]\\
                                  &=& 
                \sum_{s_1^N(b_1^N)\sqsupseteq a_1^N} \nu[s_1^N]\mu^{(n)}[b_1^N]
                                   = \left(\mu^{(n+1)}\right)_1^{N}(a_1^N),
\end{eqnarray*}
for each $a\in A^{N+1}$. From here it follows, by taking the limit $n\to\infty$, 
that $\{v_N: (0,1)^{A^N}\to (0,1)\}$ is a compatible family of probability vectors, 
i.~e., $\sum_{a_{N+1}} v_{N+1}(a)=v_N(a_1^N)$, for each $N\in {\mathbb N}$ and 
$a\in A^{N+1}$. Hence, by virtue of Kolmogorov's consistency theorem, there 
exists a unique, well defined measure $\mu_\nu\in{\mathcal M}^+(A^{\mathbb N})$ 
such that 
\[
\lim_{n\to\infty}({\mathbb S}_\nu^{\circ n}\mu)_1^N=(\mu_\nu)_1^N\equiv v_N,
\]
which gives the desired result. 
\end{proof}

\medskip\noindent\begin{remark}
Primitivity in deterministic substitutions, which is behind the minimality of 
the associated dynamical system (see for instance~\cite{2010QueffelecBook}), 
can be readily checked. We have a simple characterization of primitivity 
for constant-length substitutions, and although we do not have such a complete 
characterization for general random substitutions, nevertheless, we can  
establish easily verifiable sufficient conditions 
(see Appendix~\ref{primitivity}).
\end{remark}

\medskip\noindent
\subsection{}
In the case of primitive constant-length substitutions with length strictly larger 
than one, we can prove that the unique ${\mathbb S}_\nu$-invariant state has 
polynomial decay of correlations. Indeed, we have the following.

\begin{theorem}[{\bf Decay of correlations}]\label{teo:decayofcorrelation} 
Let $S$ be primitive constant-length substitution of length $L>1$ and let 
$\nu\in {\mathcal M}(S^{\mathbb N})$ be a fully supported product measure. 
Then there are constants $C,\gamma >0$ and $n_0\in {\mathbb N}$ such that for all 
$a,b\in A$, and $n\geq n_0$ we have
\[
1-C\,n^{-\gamma}\leq 
\frac{\mu_\nu([a]_1\cap [b]_n)}{\mu_\nu[a]_1\,\mu_\nu[b]_n}
                \leq 1+C\,n^{-\gamma}.
\]
\end{theorem}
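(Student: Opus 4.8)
The plan is to exploit the self-similarity encoded in the invariance $\mu_\nu={\mathbb S}_\nu\mu_\nu$ to set up a renormalization identity expressing the two-point correlation at scale $n$ in terms of the correlation at scale $\lceil n/L\rceil$, and to iterate it about $\log_L n$ times; since $L>1$ each application strictly shrinks the scale until the two sites fall into a common length-$L^k$ block. Concretely, I would work with the centred correlation $K_n(a,b):=\mu_\nu([a]_1\cap[b]_n)-\mu_\nu[a]_1\,\mu_\nu[b]_n$, viewed as an $A\times A$ matrix, which satisfies $\sum_a K_n(a,b)=\sum_b K_n(a,b)=0$. Applying $\mu_\nu={\mathbb S}_\nu\mu_\nu$ once, position $1$ is the offset-$1$ letter of the block over parent site $1$, while position $n$ is the offset-$r$ letter of the block over parent site $j:=\lceil n/L\rceil$, with $r=n-(j-1)L$. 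For $n>L$ one has $j\ge 2$, so the parent sites are \emph{distinct}; since $\nu$ is a product measure the substitutions acting on them are independent, and summing out the free coordinates yields
\[
K_n(a,b)=\sum_{c,c'}P_1(a\mid c)\,Q_r(b\mid c')\,K^{(1,j)}(c,c'),
\]
where $P_1(a\mid c)=\mathbb{P}_\nu[\sigma(c)_1=a]$ and $Q_r(b\mid c')=\mathbb{P}_\nu[\sigma(c')_r=b]$ are column-stochastic; the same manipulation applied to $\mu_\nu[a]_1=\sum_c P_1(a\mid c)\mu_\nu[c]_1$ and $\mu_\nu[b]_n=\sum_{c'}Q_r(b\mid c')\mu_\nu[c']_j$ produces the product term and identifies $K^{(1,j)}(c,c')=\mu_\nu([c]_1\cap[c']_j)-\mu_\nu[c]_1\mu_\nu[c']_j$.

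The contraction is the engine of the proof. Measuring $\|K\|:=\sum_{a,b}|K(a,b)|$, applying the column-stochastic $Q_r$ on the $c'$-index never increases $\|\cdot\|$ and preserves the zero-sum property in $c$; applying $P_1$ on the $c$-index to a zero-sum vector contracts by the Dobrushin coefficient $\tau(P_1)=1-\min_{c,\tilde c}\sum_a\min\bigl(P_1(a\mid c),P_1(a\mid\tilde c)\bigr)$. Hence $\|K_n\|\le\tau(P_1)\,\|K^{(1,j)}\|$. Iterating keeps site $1$ on its offset-$1$ branch, so after $m$ single steps the site-$1$ factor is exactly the $m$-level prefix transition $P_1^{(m)}(a\mid c)=\mathbb{P}_\nu[(\text{$m$-level image of }c)_1=a]$, while the site-$n$ branch contributes only a product of non-expanding stochastic matrices. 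Primitivity for $N=1$ furnishes $n_1$ with $P_1^{(n_1)}$ strictly positive, whence $\tau\bigl(P_1^{(n_1)}\bigr)\le\tau_0<1$, so each group of $n_1$ steps contracts $\|K\|$ by $\tau_0$, regardless of the offsets visited by site $n$.

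Iterating the identity until the parent of site $n$ reaches site $1$ takes about $\lfloor\log_L n\rfloor$ steps, leaving a base term $K^{(1,j)}$ with $2\le j\le L$, bounded by a universal constant $B$. Grouping the steps in blocks of $n_1$ then gives
\[
\|K_n\|\le B\,\tau_0^{\lfloor\log_L n\rfloor/n_1}\le C\,n^{-\gamma},\qquad \gamma=\frac{\log(1/\tau_0)}{n_1\log L}>0.
\]
Finally, since $\mu_\nu\in{\mathcal M}^+(A^{\mathbb N})$ and, by primitivity together with the finiteness of $A$, every letter is produced at every offset with a probability bounded below independently of the offset, the single-site marginals obey $\inf_{n,b}\mu_\nu[b]_n=:\delta>0$. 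Dividing, $\bigl|\mu_\nu([a]_1\cap[b]_n)/(\mu_\nu[a]_1\mu_\nu[b]_n)-1\bigr|=|K_n(a,b)|/(\mu_\nu[a]_1\mu_\nu[b]_n)\le\delta^{-2}C\,n^{-\gamma}$, which is the assertion after renaming constants and choosing $n_0$ so that the bound is meaningful.

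The main obstacle lies in the uniform contraction and the uniform lower bound $\delta$. A single substitution step need not make the offset-transition matrices positive, so a Dobrushin coefficient bounded away from $1$ is only recovered after the $n_1$ levels granted by primitivity; the argument must therefore be organized around the prefix branch of site $1$, where compositions of offset-$1$ transitions assemble into the primitive $P_1^{(n_1)}$, rather than around the wandering branch of site $n$. Establishing $\delta>0$ uniformly over \emph{all} positions is the other delicate point: it rests on the full support of $\mu_\nu$ combined with the fact that every symbol appears at every offset of the iterated images with probability bounded below, a uniformity that again traces back to primitivity and the finiteness of the alphabet.
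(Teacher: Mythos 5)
Your renormalization skeleton is in essence the paper's own argument: both proofs pull sites $1$ and $n$ back through the constant-length hierarchy to a common ancestor at position $1$ after roughly $\log_L n$ generations (the base-$L$ digits of $n-1$ recording the offsets), use the product structure of $\nu$ to factorize the pair correlation conditionally on that ancestor, and extract decay from a contraction along the time-homogeneous position-$1$ branch, whose transition matrix $P_1=M_{1,1}$ inherits primitivity from $S$. The genuine gap is in how you measure the contraction, and it surfaces in your last step. Dobrushin/total-variation contraction of the centred kernel gives an \emph{additive} estimate $\sum_{a,b}|K_n(a,b)|\leq C n^{-\gamma}$, while the theorem asserts a \emph{ratio} estimate; to bridge them you claim $\inf_{n,b}\mu_\nu[b]_n=\delta>0$ ``by primitivity and finiteness of $A$''. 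That claim is false under the stated hypotheses. The hypotheses allow any fully supported product measure, whose marginals $\nu_n$ may degenerate: take $A=\{0,1\}$, $L=2$, $S=\{\sigma_w:\,w\in A^2\}$ with $\sigma_w(a)=w$ for every $a$ (primitive with $n_N=1$), and $\nu_m[\sigma_{00}]=\nu_m[\sigma_{01}]=\epsilon_m/2$, $\nu_m[\sigma_{10}]=\nu_m[\sigma_{11}]=(1-\epsilon_m)/2$ with $\epsilon_m\to 0$. Then $\mu_\nu[0]_{2m-1}=\epsilon_m\to 0$, so no uniform $\delta$ exists (here correlations vanish exactly for $n\geq 3$, so this kills your lemma, not the theorem). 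Worse, the failure persists even for Bernoulli $\nu$: on a three-letter alphabet one can arrange the offset-$2$ matrix so that its powers drive the mass of one letter to zero along the path $n=2^k$, making $\mu_\nu[b]_n$ itself decay polynomially, say like $n^{-1}$; dividing your additive bound by $\mu_\nu[a]_1\mu_\nu[b]_n$ then yields only $O(n^{1-\gamma})$, which is vacuous whenever $\gamma\leq 1$.

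The repair is exactly what the paper does: run the same contraction \emph{multiplicatively}, in Hilbert's projective metric rather than in total variation. Birkhoff's theorem for the primitive matrix $M_{1,1}$ gives constants $C_v>0$ and $\eta\in[0,1)$ such that $\mu_\nu[a]_1 e^{-C_v\eta^q}\leq M_{1,1}^q(a,c)\leq \mu_\nu[a]_1 e^{C_v\eta^q}$ \emph{uniformly in} $c$. Writing, as you essentially do, $\mu_\nu([a]_1\cap[b]_n)=\sum_{c}M_{1,1}^q(a,c)\,\bigl(\prod_{k}M_{n_k,e_k+1}\bigr)(b,c)\,\mu_\nu[c]_1$, the multiplicative error factors out of the sum, and the remaining sum is \emph{exactly} $\mu_\nu[b]_n$ by the same chain identity applied to the single site $n$. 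Hence $e^{-C_v\eta^q}\mu_\nu[a]_1\mu_\nu[b]_n\leq\mu_\nu([a]_1\cap[b]_n)\leq e^{C_v\eta^q}\mu_\nu[a]_1\mu_\nu[b]_n$ with $q\approx\log_L n$, and the ratio bound follows with no lower bound on any marginal. In short: your decomposition, use of independence, and primitivity-based contraction along the prefix branch are all sound, but the additive norm is the wrong yardstick for a multiplicative statement, and the step meant to convert one into the other rests on a false uniformity; replace the Dobrushin coefficient by Birkhoff's contraction coefficient and the proof closes.
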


\medskip\noindent The proof is again based on the classical Perron-Frobenius 
Theorem for finite matrices as stated in~\cite{2003Cavazos}. 

\medskip \noindent
\begin{proof} For each $n\in{\mathbb N}$ and $1\leq j\leq L$ 
let $M_{n,j}:A\times A\to [0,1]$ be such that 
$M_{n,j}(a,c)=\sum_{\sigma(c)_j=a}\nu[\sigma]_n$, which is just
the probability of obtaining a string with the symbol $a$ at the $(n-1)L+j$-th 
position by substitution of the symbol $c$. It is not difficult to verify 
that $M_{1,1}$ inherits the primitivity of $S$, hence,  
by virtue of Birkhoff's version of Perron-Frobenius theorem (as stated 
in~\cite{2003Cavazos}), and taking into account that $\mu_\nu$ is the unique 
${\mathbb S}_\nu$-invariant state, we necessarily have 
\[
\mu_\nu[a]_1\,e^{-C_v\eta^k}\leq M_{1,1}^k (a,c)\leq \mu_\nu[a]_1\,e^{C_v\eta^k}.
\]
for every $k\in {\mathbb N}$. Here $C_v > 0$ and $\eta=\tau^{1/\ell}\in [0,1)$, 
with $\tau\in[0,1)$ and $\ell\in{\mathbb N}$ are respectively the Birkhoff's coefficient 
and the primitivity index of the matrix $M_1$.

\medskip \noindent Now, by using the base-$L$ expansion of the integers 
we can link the different marginals of $\mu_\nu$ to 
its first marginal. Indeed, let $n-1=\sum_{k=0}^q e_k L^k$ and for each 
$0\leq k\leq q$ define $n_k-1=\sum_{i=0}^k L^i e_{k-i}$, then
\[
\mu_\nu[b]_n= \sum_{c\in A}\left(\prod_{k=0}^q 
M_{n_k,e_k+1}\right)(b,c)\,\mu_\nu[c]_1.
\]
The integers $n_0,n_1,\ldots,n_q=n$ are the successive positions in a 
sequence of substitutions, connecting the first position to the $n$-th position. 
Since $\nu$ is a product measure, then
\[
\mu_\nu([a]_1\cap [b]_n)=\sum_{c\in A} M_{1,1}^q(a,c)\, 
              \left(\prod_{k=0}^q M_{n_k,e_k+1}\right)(b,c)\,\mu_\nu[c]_1,
\]
for each $n\in{\mathbb N}$ and $a,b\in A$. Now, taking into account that $M_{1,1}$ is 
primitive, we finally obtain
\[
e^{-C_v\eta^q}\mu_\nu[a]_1\,\mu_\nu[b]_n \leq \mu_\nu([a]_1\cap [b]_n)\leq 
                 e^{C_v\eta^q}\mu_\nu[a]_1\,\mu_\nu[b]_n
\]
and the result follows by taking $n_0=L^{q_0}$ with $C_v\eta^{q_0}< 1/2$, 
$C=2 C_v$ and $\gamma=|\log\eta/\log L|$.
\end{proof}

\medskip\noindent\begin{remark}
All of the computations carried out on the previous proof can be adapted to non-constant 
length substitutions provided the minimal length $\ell_S > 1$. 
In that case we have a polynomial upper bound for the decay of correlation, 
with an exponent $-\gamma=\log\tau/(\ell\log L_S)$, where 
$\tau$ and $\ell$ are respectively the Birkhoff's coefficient and primitivity index of $M_{1,1}$. 
Polynomial decay of correlations can also appear in non-constant length  
substitutions for which $\ell(S)=1$, as we proved in~\cite{2013SalgadoUgalde}. In the last 
case some additional properties of the substitution play a role in the derivation of the 
polynomial decay of correlations.
\end{remark}

\medskip\noindent
\subsection{}  
Let ${\mathcal L}=\{\Lambda \subset {\mathbb N}:\ \#\Lambda<\infty\}$ and
for each $n\in {\mathbb N}$ and $n\in \Lambda \subset {\mathcal L}$, let 
$\Phi_{n,\Lambda} : A^{\mathbb N}\to {\mathbb R}$ be a measurable function. 
The collection 
$\Phi:=\{\Phi_{n,\Lambda}\}_{n\in\Lambda\subset {\mathcal L}}$ is 
an {\bf\em interaction potential}, if for each $a\in A^{\mathbb N}$ and 
$\Lambda\in{\mathcal L}$, the {\bf\em total energy} of $a$ in $\Lambda$, 
\begin{equation}\label{eq:energy}
H_\Lambda(a):=\sum_{n\in\Lambda}\sum_{\Lambda'\ni n}\Phi_{n,\Lambda'}(a)<\infty.
\end{equation}

\medskip\noindent We will distinguish the particular case of 
{\bf\em two-body interaction potential}, which corresponds to interactions 
$\Phi=\{\Phi_{n,\Lambda}\}_{n\in\Lambda\subset {\mathcal L}}$ satisfying
$\Phi_{n,\Lambda}=0$ whenever $\#\Lambda\neq 2$. 

\medskip\noindent
\begin{remark}
Here we are using a slightly different definition of interaction potential than
the classical one as it appears in~\cite{1988GeorgiiBook} or~\cite{2004RuelleBook}. 
Our adaptation does not affect the development of the theory since the 
Gibbs factors defined by using our definition are the same as those defined by
using the classical defintion.
\end{remark}

\medskip\noindent 
A measure $\mu\in {\mathcal M}^+(A^{\mathbb N})$ is a {\bf\em Gibbs measure} 
for the interaction potential $\Phi$ if for each finite 
$\Lambda\subset {\mathbb N}$ and fixed $a\in A^{\mathbb N}$ we have
\[
\mu\{x_\Lambda = a_\Lambda |x_{\Lambda^c}=a_{\Lambda^c}\}:=\lim_{N\to\infty}
\mu\left([a_{\Lambda}]|\,a_{\{1,2,\ldots,N\}\setminus\Lambda}]\right)
=\frac{e^{-H_\Lambda(a_\Lambda\oplus a_{\Lambda^c})} }{
\sum_{a'_\Lambda\in A^\Lambda}e^{-H_\Lambda(a'_\Lambda\oplus a_{\Lambda^c}) }}.
\] 
Here we use $a_F$ to denote the projection of $a\in A^{\mathbb N}$ on 
the coordinates $F\subset {\mathbb N}$, and 
$a'_\Lambda\oplus a_{\Lambda^c}\in A^{\mathbb N}$ to denote the configuration 
whose projections on $\Lambda$ and $\Lambda^c:={\mathbb N}\setminus\Lambda$ 
coincide with $a'_\Lambda$ and $a_{\Lambda^c}$ respectively. 
We denote by ${\mathcal G}(\Phi)\subset {\mathcal M}(A^{\mathbb N})$ the set of 
all Gibbs measures for the interaction $\Phi$.
Although the functions $\Phi_{n,\Lambda}$ are only required to be Borel measurable 
and satisfying~\eqref{eq:energy}, we will further require them to be 
{\bf\em absolutely summable}, {\it i. e.}, to be such that
\[
\sum_{\Lambda\ni n}||\Phi_{n,\Lambda}||< K_\Phi,
\]
for some constant $K_\Phi>0$ and all $n\in{\mathbb N}$.

\medskip\noindent
\subsection{}
If the interaction potential $\Phi=\{\Phi_{n,\Lambda}\}_{n,\Lambda\in{\mathcal L}}$ 
is absolutely summable, the existence of Gibbs measures for this interaction follows from 
the compactness of ${\mathcal M}(A^{\mathbb N})$ in the vague topology and 
the uniform continuity of the {\bf\em local potentials} 
$\{\phi_n: A^{\mathbb N}\to{\mathbb R}\}_{n\in{\mathbb N}}$ given by 
\[
\phi_n(a):=\sum_{\Lambda\ni n} \Phi_{n,\Lambda}(a).
\]
This is a particular case of a more general result which can be found 
in Georgii's book~\cite{1988GeorgiiBook}. There it is also proved that 
the set ${\mathcal G}(\Phi)$ is a Choquet simplex, which means that each 
$\mu\in {\mathcal G}(\Phi)$ can be decomposed, in a unique way, as a convex
combination of extremal measures. To be more precise, there exists a set 
${\mathcal E}\subset {\mathcal G}(\Phi)$ and for each  $\mu\in {\mathcal G}(\Phi)$ 
there exists a unique Borel probability measure 
$\nu_{\mu}\in {\mathcal M}({\mathcal E})$ such that 
$\mu=\int_{\mathcal E}\eta\ d\nu_{\mu}(\eta)$. It is also proved that different 
extremal measures are mutually singular, \emph{i.~e.} if $\mu,\mu'$ are different 
extremal measures, then $\mu\perp\mu'$.

\medskip\noindent
\begin{remark}
Notice that our notion of absolute summability for the interaction differs from, but it is 
closely related to the classical notion of regularity of local potential one can find 
in~\cite{1998KellerBook}. Notice as well that the total energy of a 
configuration $a\in A^{\mathbb N}$ inside the volume $\Lambda\in{\mathcal L}$, can be 
computed by using the local potentials as $H_\Lambda(a)=\sum_{n\in\Lambda}\phi_n(a)$.
\end{remark}

\medskip\noindent  
\subsection{}
The most generally applicable criterion for uniqueness is Dobrushin's 
condition~\cite{1968Dobrushin}, which depends on the behavior of a correlations 
matrix computed from the interaction potential. Fix $n,\, m\in {\mathbb Z}$ and 
define
\[
C_\Phi(n,m)=\sup_{x,b,c\in A^{\mathbb N}}
\frac{1}{2}\sum_{a_n\in A}
\left|\frac{e^{-H_\Lambda(a_n\oplus b_m\oplus x_{\{m,n\}^c})}}{
     \sum_{a'_n\in A}e^{-H_\Lambda(a'_n\oplus b_m\oplus x_{\{m,n\}^c})}}-
      \frac{e^{-H_\Lambda(a_n\oplus c_m\oplus x_{\{m,n\}^c})}}{
    \sum_{a'_n\in A}e^{-H_\Lambda(a'_n\oplus c_m\oplus x_{\{m,n\}^c})}}\right|,
\]
where $x_n\oplus y_m\oplus z_{\{m,n\}^c}$ representes the obvious concatenation. 
Dobrushin result states that if 
\[
\sup_{n\in {\mathbb N}} \sum_{m\in {\mathbb Z}}C_\Phi(n,m) < 1,
\]
then $\#{\mathcal G}(\Phi)=1$. An easily-checked condition, derived from the 
previous one, is due to Simon~\cite{1979Simon} and depends on the decay of the 
oscillation of the interaction potential. The {\bf\em oscillation} of a function 
$\psi:A^{\mathbb N}\to {\mathbb R}$ is given by 
$\sup_{a,b\in A^{\mathbb N}}|\psi(a)-\psi(b)|$ and denoted ${\rm osc}(\psi)$. 
Adapting from Simon, if the collection of local potentials 
$\{\Phi_{n,\Lambda}:A^{\mathbb N}\to {\mathbb R}\}_{n\in{\mathbb N}}$ is such that
\begin{equation}\label{eq:simoncondition}
\sup_{n\in {\mathbb N}} 
            \sum_{\Lambda\ni n}(\#\Lambda-1){\rm osc}(\Phi_{n,\Lambda}) < 2,
\end{equation}
then $\#{\mathcal G}(\Phi)=1$.
  
\medskip \noindent
\begin{remark}
A complete discussion concerning Gibbs measures, its existence and 
uniqueness, can be found in Georgii's book cited above.
\end{remark}

\bigskip
\section{Constant-length substitutions}\label{sec:constant-length}
\subsection{} 
A {\bf\em Markov measure} is nothing but a Gibbs measure 
$\mu\in {\mathcal G}(\Phi)$ for an interaction potential 
$\Phi=\{\Phi_{n,\Lambda}\}_{n\in\Lambda\in{\mathcal L}}$ 
satisfying $\Phi_{n,\Lambda}\equiv 0$ for all 
$\Lambda\subset\in{\mathcal L}$ such that $\max\Lambda-\min\Lambda > r$ 
for some $r\in {\mathbb N}$. 
The minimal integer $r$ satisfying the above condition is know as {\bf\em the 
range of the interaction potential}. In this case, the corresponding local 
potential $\phi_n$ depends only on sites at distance not larger than $r$ from 
site $n$, and we have
\begin{align*}
\mu\{x_n=a_n|\, x_{\mathbb N\setminus \{n\}}=b\}&=
                              \mu\{x_n=a_n|\, x_m=b_m: \, 0<|m-n|\leq r\}\\
                                              &= 
          \frac{e^{-\phi_n(a_n,b_m:\ 0<|m-n|\leq r)}
              }{\sum_{c\in A}e^{-\phi_n(c,b_m:\ 0<|m-n|\leq r)}}.  
\end{align*}
For Markov measures as defined above, condition~\eqref{eq:simoncondition} is 
trivially satisfied, therefore $\#{\mathcal G}(\Phi)=1$.

\medskip\noindent \subsection{}
Given $\mu\in {\mathcal M}(A^{\mathbb N})$, an 
{\bf\em approximation scheme} is a sequence 
$\{\mu^{(\ell)}\in {\mathcal M}(A^{\mathbb N})\}_{\ell\in{\mathbb N}}$ of Markov 
measures, such that $\lim_{\ell\to\infty}\mu^{(\ell)}=\mu$. The convergence could 
be in the vague topology, or in the projective distance, or in any other topology 
we consider on ${\mathcal M}(A^{\mathbb N})$. Concerning the vague topology we 
already have the following.

\medskip\noindent 
\begin{proposition}[{\bf Vague convergence}] 
Let $S$ be a primitive constant-length substitution of length $L > 1$ and let 
$\nu\in {\mathcal M}(S^{\mathbb N})$ be a fully supported product measure. 
Let $\mu_\nu$ the unique ${\mathbb S}_\nu$-invariant state. Then, for any
product measure $\mu$, the sequence
$\left\{\mu^{(\ell)}:={\mathbb S}_\nu^{\circ \ell}\mu\right\}_{\ell\in{\mathbb N}}$ 
is an approximation scheme for $\mu_\nu$, converging in the vague distance. 
\end{proposition}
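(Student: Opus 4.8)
The plan is to show two things: first, that each $\mu^{(\ell)} := \mathbb{S}_\nu^{\circ \ell}\mu$ is a Markov measure (so that the sequence is genuinely an approximation scheme in the sense defined above), and second, that $\mu^{(\ell)} \to \mu_\nu$ in the vague topology. The convergence is essentially free: since $S$ is primitive and $\nu$ is fully supported, Theorem~\ref{teo:vagueconvergence} already gives $\lim_{\ell\to\infty}\mathbb{S}_\nu^{\circ \ell}\mu = \mu_\nu$ vaguely for every $\mu \in \mathcal{M}(A^{\mathbb{N}})$, and in particular for the chosen product measure $\mu$. So the real content is the first claim, and that is where I would spend the effort.

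To establish the Markov property, I would start from the defining relation~\eqref{eq:substitution} and track how correlations are created by one application of the substitution. The key structural fact in the constant-length case is that the substitution maps the coordinate block $\{(n-1)L+1,\dots,nL\}$ of the image to the single coordinate $n$ of the preimage, and that under a product measure $\nu$ on $S^{\mathbb{N}}$ the substitutions acting on distinct coordinates are chosen independently. I would argue that if $\mu$ is Markov of range $r$, then $\mu^{(1)} = \mathbb{S}_\nu \mu$ is Markov of some range $r'$ controlled by $L$ and $r$: conditioning on the symbols sufficiently far to the left and right of a block, the conditional law of the symbols inside a given image block factorizes because (i) the preimage law is Markov, hence has a bounded memory, and (ii) the independent choice of substitution letters localizes the influence of distant coordinates. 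Starting from a product measure $\mu$ (which is Markov of range $0$), induction on $\ell$ then shows every $\mu^{(\ell)}$ is Markov, with range growing at most geometrically in $L$.

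Concretely, I would compute the finite-dimensional conditional probabilities of $\mu^{(\ell)}$ directly. Fix a coordinate $k$ and write $k = (n-1)L + j$ with $1 \le j \le L$, so that $k$ sits in the image of preimage-site $n$. The value $x_k$ is determined by the substitution $s_n$ applied to $y_n$, namely $x_k = s_n(y_n)_j$. Because $\nu$ is a product measure, $s_n$ is independent of all other $s_{n'}$, so once the preimage symbol $y_n$ is fixed the distribution of the whole block $x_{(n-1)L+1}^{nL}$ depends only on $y_n$ and not on any other preimage coordinate; and the distribution of $y_n$ given its Markov neighbors is governed by the range-$r$ structure of $\mu$. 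Combining these two independencies, the conditional distribution of $x_k$ given all other coordinates depends only on coordinates lying in a bounded window around $k$, whose width is determined by $L$ together with the range of $\mu$. This is exactly the Markov property with an explicit finite range, and it is preserved under iteration.

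The main obstacle I anticipate is the boundary bookkeeping: the window around $k$ depends on whether $k$ lies near the edge of its image block, because a symbol produced at position $j$ in a block can feel the preimage neighbor $n\pm 1$ only through the Markov coupling in $\mu$, and one must verify that conditioning on a long enough stretch of image coordinates indeed determines the relevant preimage symbols and screens off everything beyond the computed range. I would handle this by making the range estimate uniform in $j$, taking the worst case over the position within the block, so that a single finite range $r'$ works for all coordinates. Once the range is shown to be finite and uniform, Markovianity of each $\mu^{(\ell)}$ follows, and together with the vague convergence from Theorem~\ref{teo:vagueconvergence} this completes the proof that $\{\mu^{(\ell)}\}$ is an approximation scheme for $\mu_\nu$ in the vague distance.
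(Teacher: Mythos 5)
Your first half (convergence) is exactly the paper's: Theorem~\ref{teo:vagueconvergence} gives $\mu^{(\ell)}\to\mu_\nu$ vaguely, nothing more to do. The gap is in your second half. Your inductive step --- \emph{if $\mu$ is Markov of range $r$, then ${\mathbb S}_\nu\mu$ is Markov of some finite range $r'$} --- is false in general. After one substitution step the image blocks $x_{(n-1)L+1}^{nL}=s_n(y_n)$ are only \emph{conditionally} independent given the preimage sequence $y$; the image measure is thus a hidden-Markov-type measure (a noisy, generally non-invertible observation of the chain $y$), and such measures are generically not Markov of any finite range. Concretely, take $A=\{0,1\}$, $L=2$, $S=\{\sigma_1,\sigma_2\}$ with $\sigma_1(0)=00$, $\sigma_1(1)=11$, $\sigma_2(0)=11$, $\sigma_2(1)=00$: the block $s_n(y_n)$ is a binary-symmetric-channel observation of $y_n$, and if $y$ is a genuinely Markov (non-i.i.d.) chain, the observed block process is not Markov of any order. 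This is exactly where your ``screening'' argument breaks: conditioning on image coordinates does not determine the preimage symbols, each block carries only partial information about its $y_n$, and since the $y_n$ are correlated under a Markov $\mu$, image blocks arbitrarily far away keep influencing the posterior of $y_n$, hence the conditional law at site $k$. The ``boundary bookkeeping'' you flag as the main obstacle is in fact this obstruction, and taking the worst case over $j$ does not remove it.

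The repair is to strengthen the inductive invariant from ``Markov of range $r$'' to ``product over the $L$-adic blocks of generation $\ell$'', which is what the paper does directly. Since both $\mu$ and $\nu$ are product measures, one gets $\mu^{(\ell)}[a]=\prod_{k=0}^{m-1}\mu^{(\ell)}\left[a_{kL^\ell+1}^{(k+1)L^\ell}\right]_{kL^\ell+1}$ for $a\in A^{mL^\ell}$: block-independence \emph{is} preserved by one more substitution step (a generation-$(\ell+1)$ block is the image of a single generation-$\ell$ block, distinct blocks have independent symbols and are substituted by independent coordinates of $\nu$), whereas plain Markovianity is not. Block-independence then gives Markovianity for free, and moreover in the form the paper actually requires: recall that here a ``Markov measure'' is by definition a Gibbs measure for a finite-range interaction, so you must exhibit a potential, not merely finite-range conditional probabilities. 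The block structure supplies it explicitly, $\Phi^{(\ell)}_{n,\Lambda}(a)=-L^{-\ell}\log\mu^{(\ell)}\left[a_{B(n,\ell)}\right]$ for $\Lambda=B(n,\ell)$ and $0$ otherwise, an interaction of range $L^\ell$ (note the range grows with $\ell$; no uniform range is needed, nor is one available).
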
 

\medskip\noindent \begin{proof}
The vague convergence of $\left\{\mu^{(\ell)}\right\}_{\ell\in{\mathbb N}}$ 
towards $\mu_\nu$ follows from Theorem~\ref{teo:vagueconvergence}. 
We only have to check that $\mu^{(\ell)}$ is a Markov measure. For this, note that 
$\mu^{(\ell)}$ is a block-independent process. Indeed, for each 
$\ell,m\in{\mathbb N}$ and $a\in A^{m\,L^\ell}$ we have
$\mu^{(\ell)}\left[a\right]=\prod_{k=0}^{m-1}
\mu^{(\ell)}\left[a_{kL^\ell+1}^{(k+1)\,L^\ell}\right]_{kL^\ell+1}$,
therefore
\[
\mu^{(\ell)}\left\{x_\Lambda=a_\Lambda|\, x_{\Lambda^c}=b_{\Lambda^c}\right\}
=\mu^{(\ell)}\left\{x_\Lambda=a_\Lambda|\, x_{B(\Lambda,n)\setminus\Lambda}
                               =b_{B(\Lambda,n)\setminus\Lambda}\right\},
\]
where $B(\Lambda,\ell)=\bigcup_{n\in\Lambda}B(n,\ell)$ with
$B(n,\ell):=\lfloor(n-1)/L^{\ell}\rfloor\,L^{\ell}+\{1,2,\ldots,L^{\ell}\}$.
It can be verified that $\mu^{(\ell)}$ is determined by interaction 
potential 
$\Phi^{(\ell)}=\{\Phi^{(\ell)}_{n,\Lambda}\}_{n\in\Lambda\in{\mathcal L}}$ 
given by
\[ \Phi^{(\ell)}_{n,\Lambda}(a)=\left\{
   \begin{array}{cl} - L^{-\ell}\log\mu^{(\ell)}\left[a_{B(n,\ell)}\right]
          & \text{ if } \Lambda=B(n,\ell) \text{ for some } n\in{\mathbb N},\\ 
        0 & \text{ otherwise.} \end{array}\right.
\]
Indeed we have
\[ \mu^{(\ell)}\left\{x_\Lambda=a_\Lambda|\, x_{\Lambda^c}=b_{\Lambda^c}\right\}
= \frac{e^{-\sum_{n\in\Lambda}\Phi^{(\ell)}_{n,B(n,\ell)}
\left(a_\Lambda\oplus b_{\Lambda^c}\right)}
      }{\sum_{a'_\Lambda\in A^{\Lambda}}
          e^{-\sum_{n\in\Lambda}\Phi^{(\ell)}_{n,B(n,\ell)}
             \left(a'_\Lambda\oplus b_{\Lambda^c}\right)} }.
\]
We finish the proof by noting that $\Phi^{(\ell)}$ has range is $L^{\ell}$.
\end{proof}

\medskip\noindent \subsection{}
Now, concerning the convergence in projective distance,
we have the following.

\begin{theorem}[{\bf Projective convergence}]\label{teo:projectiveapproximation} 
Let $S$ be a primitive constant-length substitution of length $L > 1$ and let 
$\nu\in {\mathcal M}(S^{\mathbb N})$ be a fully supported product measure. 
Let $\mu_\nu$ be the unique ${\mathbb S}_\nu$-invariant state and 
$\mu$ the unique product measure such that $\mu[a]_n=\mu_\nu[a]_n$ 
for each $n\in {\mathbb N}$ and $a\in A$. 
If $\rho(\mu,\mu_\nu)<\infty$, then the approximation scheme 
$\left\{
\mu^{(\ell)}:={\mathbb S}_\nu^{\circ \ell}\mu\right\}_{\ell\in{\mathbb N}}$
converges in the projective sence. 
\end{theorem}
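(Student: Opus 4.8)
The plan is to realize $\mathbb{S}_\nu$ as a strict contraction, in the projective distance, on the affine family of measures sharing the first-coordinate marginal of $\mu_\nu$, and then to iterate starting from $\mu^{(0)}=\mu$. Throughout, write $d_n(\eta,\eta'):=\max_{b\in A^n}\left|\log(\eta[b]/\eta'[b])\right|$, so that $\rho(\eta,\eta')=\sup_{n}\tfrac1n d_n(\eta,\eta')$ and in particular $d_n(\eta,\eta')\le n\,\rho(\eta,\eta')$.

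The first ingredient is a length reduction. Fix $a\in A^N$ and put $n=\lceil N/L\rceil$. Since the substitution has constant length $L$, the condition $s(b)\sqsupseteq a$ constrains only the first $n$ letters of $s$ and of $b$; marginalizing the remaining coordinates in~\eqref{eq:substitution} gives $\mathbb{S}_\nu\mu[a]=\sum_{s\in S^n,\,b\in A^n,\,s(b)\sqsupseteq a}\nu[s]\,\mu[b]$, so the image cylinder of length $N$ is a convex combination of $\mu$-cylinders of length only $n=\lceil N/L\rceil$. Writing $\mu[b]=\mu'[b]\,e^{h(b)}$ with $|h(b)|\le d_n(\mu,\mu')$ and substituting, the ratio $\mathbb{S}_\nu\mu[a]/\mathbb{S}_\nu\mu'[a]$ becomes a $\bigl(\nu[s]\mu'[b]\bigr)$-weighted average of the numbers $e^{h(b)}$; hence $\bigl|\log(\mathbb{S}_\nu\mu[a]/\mathbb{S}_\nu\mu'[a])\bigr|\le d_n(\mu,\mu')$, and therefore $d_N(\mathbb{S}_\nu\mu,\mathbb{S}_\nu\mu')\le d_{\lceil N/L\rceil}(\mu,\mu')$ for every $N$.

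The choice of $\mu$ enters in the second ingredient. The $n=1$ instance of the reduction shows that the first marginal of $\mathbb{S}_\nu\mu$ depends only on the first marginal of $\mu$; thus $\mathbb{S}_\nu$ preserves the property of having the same first marginal as $\mu_\nu$, and since $\mu$ was chosen with $\mu[a]_1=\mu_\nu[a]_1$ for all $a\in A$, every iterate satisfies $d_1(\mu^{(\ell)},\mu_\nu)=0$. Consequently, in $\rho(\mathbb{S}_\nu\mu^{(\ell)},\mathbb{S}_\nu\mu_\nu)=\sup_N\tfrac1N d_N$ the terms with $1\le N\le L$ (those with $n=1$) vanish, while for $n\ge2$ one has $N\ge(n-1)L+1$ together with $d_n\le n\,\rho(\mu^{(\ell)},\mu_\nu)$, so that $\tfrac1N d_N\le \tfrac{n}{(n-1)L+1}\,\rho(\mu^{(\ell)},\mu_\nu)\le \tfrac{2}{L+1}\,\rho(\mu^{(\ell)},\mu_\nu)$. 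Using $\mathbb{S}_\nu\mu_\nu=\mu_\nu$ this reads $\rho(\mu^{(\ell+1)},\mu_\nu)\le \tfrac{2}{L+1}\,\rho(\mu^{(\ell)},\mu_\nu)$, with $\tfrac{2}{L+1}<1$ because $L\ge2$. Starting from $\rho(\mu^{(0)},\mu_\nu)=\rho(\mu,\mu_\nu)<\infty$, which is the hypothesis, iteration yields $\rho(\mu^{(\ell)},\mu_\nu)\le\bigl(\tfrac{2}{L+1}\bigr)^{\ell}\rho(\mu,\mu_\nu)\to0$, the claimed projective convergence to $\mu_\nu$.

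I expect the main obstacle to be the small-$N$ behaviour of $\rho$: the bare one-step estimate is only non-expansive (the $N=1$ term carries factor $1$), and it is precisely the coincidence of single-site marginals—the defining feature of $\mu$—that annihilates those terms and upgrades non-expansiveness to a genuine geometric contraction. The supporting points to pin down carefully are the marginalization step producing the length-$\lceil N/L\rceil$ reduction, where constant length is used essentially, and the stability of the first marginal under $\mathbb{S}_\nu$; the finiteness hypothesis $\rho(\mu,\mu_\nu)<\infty$ is what makes the first term of the iteration, and hence all of them, finite.
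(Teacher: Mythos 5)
Your proof is correct, but it follows a genuinely different route from the paper's, and it actually proves more. The paper's proof splits the cylinder lengths into two regimes relative to $L^{\ell}$: short cylinders are controlled by applying Perron--Frobenius (Hilbert/Birkhoff contraction) to the fixed-length marginal matrices $M_N$, whose contraction rates $\tau_N$ and onset indices $n_N$ degrade as $N$ grows, forcing a slowly increasing cutoff $N(\ell)$; long cylinders are controlled by exactly your averaging argument, but applied to the $\ell$-fold transition matrix at once (following Seneta), yielding the bound $\rho(\mu^{(\ell)},\mu_\nu)\leq\rho(\mu,\mu_\nu)\left(\frac{1}{N(\ell)}+\frac{1}{L^{\ell}}\right)$, whose rate is dominated by the unquantified $1/N(\ell)$ term. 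You instead iterate a one-step version of the averaging/length-reduction bound $d_N(\mathbb{S}_\nu\eta,\mathbb{S}_\nu\eta')\leq d_{\lceil N/L\rceil}(\eta,\eta')$, and you use the marginal-matching hypothesis on $\mu$ --- which the paper's proof never uses (their remark even notes the result extends to arbitrary product measures at finite projective distance) --- to annihilate the $N\leq L$ terms where this bound is merely non-expansive, upgrading it to a strict contraction of $\rho$ by the factor $2/(L+1)$. This buys you three things: the argument is more elementary (no Perron--Frobenius, no Birkhoff coefficients or primitivity indices in the convergence step; primitivity enters only via Theorem~\ref{teo:vagueconvergence}, to guarantee existence, uniqueness and full support of $\mu_\nu$, which is also what keeps all log-ratios finite along the induction); the rate is explicit and geometric, $\rho(\mu^{(\ell)},\mu_\nu)\leq(2/(L+1))^{\ell}\rho(\mu,\mu_\nu)$, which is strictly stronger than the paper's conclusion and shows that the paper's remark about ``extremely slow'' convergence of order $(\log\ell)^{-1}$ reflects a weakness of their two-regime method rather than of the approximation scheme itself (one can check on the hierarchical example of Section~\ref{sec:two-bodyinteractions} that the true rate is $L^{-\ell}$, consistent with your bound); and it sidesteps the positivity issues the paper glosses over when asserting $\sum_b M_{\ell,n}(a,b)>0$. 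The trade-off is that your argument, as written, is tied to the matching of first marginals, whereas the paper's applies to any admissible starting product measure; to recover that generality you would need to add a Perron--Frobenius step for $M_{1,1}$ to make the $d_1$ term decay, at which point the two approaches partially reconverge.
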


\medskip\noindent In Appendix~\ref{boundedness} we established sufficient 
conditions on the product measure $\nu$, ensuring that 
$\rho(\mu,\mu_\nu)<\infty$. Those conditions are satisfied, in particular, 
when $\nu$ is a fully supported Bernoulli measure. 

\medskip\noindent From this point on we will use 
$A\lessgtr B e^{\pm \epsilon}$ and $A\lessgtr B\pm \epsilon$ as shorthand notations 
for $e^{-\epsilon}B \leq A \leq e^{\epsilon} B$ and $B-\epsilon\leq A\leq B+\epsilon$ 
respetively.

\medskip\begin{proof} First note that for $N\in{\mathbb N}$ fixed and 
each $a\in A^{\mathbb N}$, we have $\mu^{(\ell)}[a]=M_N^{\ell}\mu_1^N(a)$, 
with $M_N$ as defined in the proof of Theorem~\ref{teo:vagueconvergence}. 
Since $M_N$ is primitive, then, according to Hilbert's version of the 
Perron-Frobenius Theorem, there are constants $n_N\in{\mathbb N}$ and 
$\tau_N\in [0,1)$ such that 
\[
\frac{{\mu^{(\ell)}}[a]}{\mu_\nu[a]}=
\frac{M_N^{\ell}\mu_1^N(a)}{\mu_\nu[a]}\lessgtr\exp\left(\pm
\tau_N^{\ell}\rho(\mu_\nu,\mu)\right),
\]
for all $\ell\geq n_N$. For each $N\in{\mathbb N}$ let 
$\ell_N\geq n_N$ be such that $\tau_N^{\ell_N}\leq 1/N$, then, 
for every $\ell \geq \ell_2$, we define 
$N(\ell):=\max\{N\in{\mathbb N}: \ell\geq \max(n_N,\ell_N)\}$.
Clearly $N(\ell)\rightarrow \infty$ when $\ell\to\infty$.
With this, 
\begin{equation}\label{eq:shortmarginals}
\exp\left(-\frac{\rho(\mu_\nu,\mu)}{N(\ell)}\right)\leq 
\frac{{\mu^{(\ell)}}[a]}{\mu_\nu[a]}\leq 
\exp\left(\frac{\rho(\mu_\nu,\mu)}{N(\ell)}\right),
\end{equation}
for all $a\in \cup_{n=1}^{N(\ell)} A^{n}$.

\medskip\noindent 
From now on we follow Seneta in~\cite[Lemma 3.1]{2006SenetaBook}. 
Fix $n,\ell\in {\mathbb N}$ and consider the probability transition matrix 
$M_{\ell,n}:A^{n\,L^{\ell}}\times A^{n} \to [0,1]$ such that 
\[
M_{\ell,n}(a,b)=
\sum_{s^{(1)}\cdots s^{(\ell)}(b)=a}\prod_{k=1}^\ell \nu\left[s^{(k)}\right],
\]
which is nothing but the probability of obtaining the string 
$a\in A^{n\,L^\ell}$ by a random substitution of a sequence starting 
with $b\in A^{n}$. Since the $S$ is primitive, then all the strings in 
$A^+$ can be obtained by substitution, therefore 
$\sum_{b\in A^n} M_{\ell,n}(a,b) > 0$ for all $a\in A^{n\,L^\ell}$. 
Now, for every couple of positive probability vectors $u,v: A\to (0,1)$, 
and each $a\in A^{n\,L^\ell}$, we have
\[
\frac{\left(M_{\ell,n}u\right)\, (a)}{\left(M_{\ell,n}v\right)\, (a)}
     =\sum_{b\in A^n}\frac{u(b)}{v(b)} 
     \left(\frac{M_{\ell,n}(a,b) v(b)}{\sum_{b'\in A^n}M_{\ell,n}(a,b')}\right)\in 
     \left[\min_{b\in A^N}\frac{u(b)}{v(b)},\max_{b\in A^n}\frac{u(b)}{v(b)}\right],
\]
since $b\mapsto M_{n,N}(a,b)v(b)/(\sum_{b'\in A^L}M_{n,N}(a,b'))$ defines a 
probability vector. From here, taking $v=(\mu_\nu)_1^n$ and $u=\mu_1^n$ and
reducing to the corresponding marginal, we obtain  
\begin{equation}\label{eq:longmarginals}
\exp\left(-n\rho(\mu_\nu,\mu)\right)\leq \frac{{\mu^{(\ell)}}[a]}{\mu_\nu[a]}
\leq \exp\left(n\rho(\mu_\nu,\mu)\right),
\end{equation}
for each $(n-1)\,L^{\ell} < N\leq n\,L^{\ell}$ and each $a\in A^N$.

\medskip\noindent
From inequalities~\eqref{eq:shortmarginals} and~\eqref{eq:longmarginals}, 
it follows that
\[
\rho(\mu^{(\ell)},\mu_\nu)\leq \rho(\mu,\mu_\nu)\times
\left(\frac{1}{N(\ell)}+\frac{1}{L^{\ell}}\right),
\]
and the result follows.
\end{proof}

\medskip\noindent\begin{remark} It can be shown that in general 
$N(\ell)={\mathcal O}(1/\log(\ell))$. The projective convergence is therefore 
extremely slow. We built the approximation scheme by starting with a product 
measure having the exactly the same one-marginals as the invariant state, but
the result can be extended to schemes starting with any product measure at 
finite projective distance from $\mu_\nu$. 
\end{remark}

\medskip\noindent
\subsection{}
Primitive constant-length substitutions have a nice description
in terms of interaction potentials. We have the following result.

\begin{theorem}[{\bf Gibbsianness of the invariant state}]\label{teo:gibbsianness} 
Let $S$ be a primitive constant-length substitution of length $L > 1$ and let 
$\nu\in {\mathcal M}(S^{\mathbb N})$ be a fully supported product measure. 
Let $\mu_\nu$ the unique ${\mathbb S}_\nu$-invariant state and 
$\mu$ the unique product measure such that $\mu[a]_n=\mu_\nu[a]_n$ 
for each $n\in {\mathbb N}$ and $a\in A$. 
If $\rho(\mu,\mu_\nu)<\infty$, then there exists an interaction
potential $\Phi=\{\phi_{n,\Lambda}\}_{n\in\Lambda\in{\mathcal L}}$
such that ${\mathcal G}(\Phi)=\{\mu_\nu\}$. 
\end{theorem}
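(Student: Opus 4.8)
The plan is to exhibit $\Phi$ as the limit, across the dyadic scales $L^{\ell}$, of the Markovian potentials $\Phi^{(\ell)}$ furnished by the Vague convergence Proposition, and then to read off both the Gibbs property and uniqueness from the earlier convergence results together with Simon's criterion~\eqref{eq:simoncondition}. Recall that each approximant $\mu^{(\ell)}=\mathbb{S}_\nu^{\circ\ell}\mu$ is Markov (hence Gibbs) for the range-$L^{\ell}$ potential $\Phi^{(\ell)}$, that it is independent across the scale-$\ell$ blocks $B(\cdot,\ell)$, and that $\rho(\mu^{(\ell)},\mu_\nu)\to 0$ by Theorem~\ref{teo:projectiveapproximation}. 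The candidate $\Phi$ is the hierarchical potential obtained by telescoping the block log-marginals of $\mu_\nu$: to a scale-$j$ block $B$ with scale-$(j-1)$ children $B_1,\dots,B_L$ I assign, distributed over its $L^{j}$ sites, the excess energy $V_B:=-\log\bigl(\mu_\nu[a_B]/\prod_i\mu_\nu[a_{B_i}]\bigr)$ coupling the children (the one-site marginals being the scale-$0$ term), so that $\sum_{n\in\Lambda}\phi_n$ reconstructs $-\log\mu_\nu[a_\Lambda]$ for every block $\Lambda$. Equivalently, $\Phi=\Phi^{(0)}+\sum_{\ell\ge 1}\bigl(\Phi^{(\ell)}-\Phi^{(\ell-1)}\bigr)$ in the limit where the approximants' marginals are replaced by those of $\mu_\nu$. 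There are then three things to prove: that $\Phi$ is a genuine interaction potential, that $\mu_\nu\in\mathcal{G}(\Phi)$, and that $\mathcal{G}(\Phi)=\{\mu_\nu\}$.

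The first task is to show that this formal limit is a bona fide interaction potential, i.e.\ that the scale increments $\Phi^{(\ell)}-\Phi^{(\ell-1)}$ are summable in a norm strong enough to control specifications and to guarantee $H_\Lambda(a)<\infty$ in~\eqref{eq:energy}. Here the two standing hypotheses are essential. On one hand, $\rho(\mu,\mu_\nu)<\infty$ bounds $\bigl|\log(\mu_\nu[a_B]/\prod_{n\in B}\mu_\nu[a_n])\bigr|$ by $(\#B)\,\rho(\mu,\mu_\nu)$, pinning down the total excess energy inside each block. On the other hand, the polynomial decay of correlations of Theorem~\ref{teo:decayofcorrelation}, governed by the Birkhoff coefficient $\eta=\tau^{1/\ell}$ of $M_{1,1}$, controls how the coupling $V_B$ between adjacent children concentrates near their interface, so that the hierarchical coupling at scale $j$ decays geometrically like $\eta^{j}$. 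Recasting $\Phi$ as a hierarchical many-body interaction, the per-site strength at scale $j$ is $\sim L^{j}\eta^{j}$, and absolute summability $\sum_{\Lambda\ni n}\|\Phi_{n,\Lambda}\|<K_\Phi$ follows once this geometric rate dominates the $L^{j}$ growth in the number of couplings. I expect \emph{this} estimate to be the main obstacle: the interface coupling does not itself vanish with the scale, so summability cannot be read off from a crude per-site bound, and must instead be extracted from the geometric-in-$\ell$ contraction $\tau_N^{\ell}$ underlying the proof of Theorem~\ref{teo:projectiveapproximation}, together with the decay of correlations.

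With $\Phi$ in hand, the membership $\mu_\nu\in\mathcal{G}(\Phi)$ will follow by passing to the limit in the DLR equations. By construction the within-block conditional probabilities of $\mu_\nu$ are exactly those prescribed by $e^{-H_\Lambda}/Z$ when $\Lambda$ is a union of hierarchical blocks; for general finite $\Lambda$ one argues by consistency of the specification, or directly by limits: each $\mu^{(\ell)}$ satisfies the DLR equations for $\Phi^{(\ell)}$ by the Vague convergence Proposition, and since $\rho(\mu^{(\ell)},\mu_\nu)\to 0$ controls the ratios $\mu^{(\ell)}[a]/\mu_\nu[a]$ uniformly over \emph{all} cylinders, the finite-volume conditional probabilities $\mu^{(\ell)}\{x_\Lambda=a_\Lambda\mid x_{\Lambda^c}=b_{\Lambda^c}\}$ converge to those of $\mu_\nu$, while the Gibbs factors converge to $e^{-H_\Lambda}/Z$ by the summability of the previous paragraph. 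Matching the two limits yields the DLR equations for $\mu_\nu$ relative to $\Phi$.

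Finally, for uniqueness I would verify Simon's condition~\eqref{eq:simoncondition}. The oscillation of the scale-$j$ term is $L^{-j}\,\mathrm{osc}(V_{B})$ against a factor $\#\Lambda-1=L^{j}-1$, so the relevant series reduces, up to constants, to $\sum_{j\ge1}\mathrm{osc}(V_{B^{(j)}(n)})$, which is controlled by the same geometric rate $\eta^{j}$ used for summability; after absorbing a harmless multiplicative constant into $\Phi$ (which does not change $\mathcal{G}(\Phi)$) this can be brought below the threshold $2$, whence $\#\mathcal{G}(\Phi)=1$ and, since $\mu_\nu\in\mathcal{G}(\Phi)$, we conclude $\mathcal{G}(\Phi)=\{\mu_\nu\}$. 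Should the oscillations fail Simon's explicit bound, the fallback is Dobrushin's condition, estimating the influence coefficients $C_\Phi(n,m)$ directly from the decay of correlations; in either route the uniqueness step is inseparable from the decay estimates, so the crux of the whole argument remains the quantitative control of the scale increments of $\Phi$.
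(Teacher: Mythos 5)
Your candidate interaction is exactly the paper's one (telescoped block log-marginals, distributed over the $L^{j}$ sites of each block), and your three-part outline (construct $\Phi$, membership, uniqueness) matches the paper's skeleton. But both of your quantitative steps rest on premises that are false, and the step you yourself flag as ``the main obstacle'' is precisely where your route breaks. For summability, you want the block coupling $V_B=-\log\bigl(\mu_\nu[a_B]/\prod_i\mu_\nu[a_{B_i}]\bigr)$ to decay geometrically like $\eta^{j}$ via Theorem~\ref{teo:decayofcorrelation}. This decay is simply not true: in the two-body examples of Section~\ref{sec:two-bodyinteractions} (e.g.\ the Ising-like substitution) the parent--child coupling is $-\log p_\nu(\cdot)$, of the \emph{same} strength at every scale. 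No decay is needed, though: the paper's Step three proves the scale-\emph{uniform} bound $\|V_B\|\leq L\,\rho(\mu,\mu_\nu)$ by writing $e^{-V_B(a)}=(P_{\ell,n}v)(a)/(P_{\ell,n}u)(a)$ with $u,v$ the length-$L$ marginals of $\mu$ and $\mu_\nu$ on the ancestor block, and applying the Seneta convex-combination argument already used in Theorem~\ref{teo:projectiveapproximation}; absolute summability then follows from the $1/L^{\ell}$ per-site normalization alone, since each site lies in one block per scale and $\sum_{\ell}K/L^{\ell}<\infty$. Your alternative crude bound, $|V_B|\leq\#B\,\rho(\mu,\mu_\nu)$, gives a per-site contribution of order $\rho$ at \emph{every} scale, whose sum over scales diverges --- so without the uniform (not merely $O(\#B)$) bound on $V_B$ the construction does not yield an interaction potential at all.

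The uniqueness step has a second, independent error. Simon's condition~\eqref{eq:simoncondition} cannot work here: for the scale-$j$ term one has $(\#\Lambda-1)\,\mathrm{osc}(\Phi_{n,B(n,j)})\approx\mathrm{osc}(V_{B(n,j)})$, which is $O(1)$ per scale (again see the two-body example), so the sum over the infinitely many scales containing a given site diverges, or at best yields a weak-coupling restriction on $\nu$ that the theorem does not assume; Dobrushin's condition suffers the same defect. Worse, your proposed repair --- ``absorbing a harmless multiplicative constant into $\Phi$'' --- is not harmless: multiplying an interaction by a constant changes the Gibbs factors $e^{-H_\Lambda}$ into $e^{-cH_\Lambda}$, i.e.\ it changes the temperature, and $\mathcal{G}(c\Phi)\neq\mathcal{G}(\Phi)$ in general (this is exactly why phase transitions can occur at low temperature but not at high). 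The paper's Step two avoids all of this by exploiting the hierarchical structure directly: from $\|\Phi_{n,B(n,\ell)}\|\leq K/L^{\ell}$ one gets that the local potentials of sites in $B(1,\ell)$ depend on the configuration outside $B(1,\ell)$ only up to $2K/(L^{\ell}(L-1))$, so the conditional distribution of any fixed finite $\Lambda$ becomes asymptotically independent of the boundary condition; hence every $\mu\in\mathcal{G}(\Phi)$ coincides with the thermodynamic limit $\lim_{\ell}\mu_{a,\ell}$ for an arbitrary boundary configuration $a$, giving $\#\mathcal{G}(\Phi)=1$ at \emph{any} coupling strength. In short: you need the scale-uniform bound on $V_B$ (not geometric decay), and you need the boundary-independence argument (not Simon/Dobrushin); as written, both of your key steps would fail on the paper's own examples.
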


\begin{proof} We divide the proof into three parts: First we find an explicit
form for an interaction potential $\Phi$ so that $\mu_\nu\in{\mathcal G}(\Phi)$, 
then we prove that ${\mathcal G}(\Phi)=\{\mu_\nu\}$ under the hypothesis of absolute
summability, which we establish in the final step of the proof.

\medskip\noindent 
\paragraph{\em \underline{Step one: the interaction}}
For each $n,\ell\in{\mathbb N}$ let $B(n,\ell)$ denote the unique
$L$-adic interval of generation $\ell$ containing $n$, {\it i. e.},
$B(n,\ell)=q\,L^\ell+\{1,2,\ldots,L^{\ell}\}$,
where $n=q\,L^\ell+r$, with $0\leq r < L^\ell$.
With this define
\[
\Phi_{n,B(n,\ell)}(a)=\frac{1}{L^\ell}\left(\sum_{k=0}^{L-1}\log\left(
\mu_\nu\left[a_{B(q\,L^{\ell-1}(L+k)+1,\ell-1)}\right]\right)
-\log\left(\mu_\nu\left[a_{B(n,\ell)}\right]\right)\right)
\]
for each $a\in A^{\mathbb N}$. Otherwise, 
if $\Lambda\notin \{ B(n,\ell):\, \ell\in {\mathbb N}\}$, then 
$\Phi_{n,\Lambda}\equiv 0$. A straightforward computation leads to
\[
\mu_\nu\left[a_{\Lambda}\right]=
\exp\left(-\sum_{n\in \Lambda}\sum_{\Lambda'\subset\Lambda}
\Phi_{n,\Lambda'}(a)\right),
\]
for each $\Lambda\in \{ B(n,\ell):\, n, \ell\in {\mathbb N}\}$, if by
convention we fix that
$\Phi_{n,\emptyset}\equiv 0$ for each $n\in {\mathbb N}$.

\medskip\noindent 
\paragraph{\em \underline{Step two: uniqueness}}
Let us assume at the moment, that $\Phi$ is such that
\begin{equation}\label{eq:regularityhyp}
||\Phi_{n,B(n,\ell)}||\leq \frac{K}{L^\ell}, 
\end{equation}
for some $0 \leq K <\infty$ and all $n,\ell\in{\mathbb N}$. 
Under this hypothesis $\Phi$ is absolutely summable. Further more, the local potentials 
$\phi_n(a):=\sum_{\Lambda\ni n} \Phi_{n,\Lambda}(a)$ satisfy
\begin{equation}\label{eq:regularitylocal}
\phi_n(b) \lessgtr \phi_n(a) \pm \frac{2K}{L^\ell(L-1)},
\end{equation}
for all  $\ell\in{\mathbb N}$ and $a,b\in A^{\mathbb N}$ such that 
$a_{B(1,\ell)}=b_{B(1,\ell)}$. 

\medskip\noindent 
From here we follow the ``thermodynamic limit approach'' which 
consists in considering limits of ``finite volume'' versions of Gibbs measures with 
``fixed boundary conditions''. To be more precise, 
fix $a\in A^{\mathbb N}$, and for each $\ell\in{\mathbb N}$ consider the measure
$\mu_{a,\ell}\in {\mathcal M}(A^{\mathbb N})$, with support in the finite set
$X_{a,\ell}:=\{c\in A^{\mathbb N}: c_n=a_n\ \forall n\notin B(1,\ell)\}$, 
given by
\[
\mu_{a,\ell}\{c\}=
\frac{
e^{-\sum_{n\in B(1,\ell)}\phi_n(c)}}{
\sum_{c'\in X_{a,\ell} } e^{-\sum_{n\in B(1,\ell)}\phi_n(c')} 
     }.
\]
We will prove that $\{\mu_{a,\ell}\}_{\ell \in {\mathbb N}}$ converges in 
the vague topology, and that
${\mathcal G}(\Phi)=\{\lim_{\ell\to\infty} \mu_{a,\ell}\}$.
Due to compactness, the sequence 
$\{\mu_{a,\ell}\}_{\ell \in {\mathbb N}}$
has accumulation points. Fix $\mu\in{\mathcal G}(\Phi)$, 
$\Lambda\in{\mathcal L}$ and $\ell\in {\mathbb N}$ such that 
$\max\Lambda \leq L^{\ell}$. From assumption~\eqref{eq:regularitylocal}
we derive,
\begin{align*}
\mu[b_\Lambda]&=\int_{A^{\mathbb N}}
\mu\left\{x_\Lambda=b_\Lambda|x_{{B(1,\ell)}^c}\right\}\ 
d\mu\left(x_{{B(1,\ell)}^c}\right)\\
&=\sum_{c_{B(1,\ell)\setminus\Lambda}}
\int_{A^{\mathbb N}} 
\frac{e^{-\sum_{n\in\Lambda}
\phi_n(b_\Lambda\oplus c_{B(1,\ell)\setminus\Lambda}\oplus x_{B(1,\ell)^c})}
     }{\sum_{c'_{B(1,\ell)}\in A^{B(1,\ell)} } 
e^{-\sum_{n\in\Lambda} 
\phi_n(c'_{B(1,\ell)}\oplus x_{B(1,\ell)^c})} }
d\mu\left(x_{{B(1,\ell)}^c}\right)\\
&\lessgtr e^{\pm 4\, \#\Lambda\, K_\phi\,L^{-\ell}}
\sum_{c_{B(1,\ell)\setminus\Lambda}}
\frac{e^{-\sum_{n\in\Lambda}
\phi_n(b_\Lambda\oplus c_{B(1,\ell)\setminus\Lambda}\oplus a_{B(1,\ell)^c})}
     }{\sum_{c'_{B(1,\ell)}\in A^{B(1,\ell)} } 
e^{-\sum_{n\in\Lambda} 
\phi_n(c'_{B(1,\ell)}\oplus a_{B(1,\ell)^c})} }\\
&\lessgtr e^{\pm 4\, \#\Lambda\,K_\phi\,L^{-\ell}}
\mu_{a,\ell}[b_\Lambda],
\end{align*}
for each $b\in A^{\mathbb N}$. This implies that any accumulation point of
$\{\mu_{a,\ell}\}_{\ell \in {\mathbb N}}$ coincides with $\mu$, and 
this for each $\mu{\mathcal G}(\Phi)$.  
Therefore ${\mathcal G}(\Phi)=\{\lim_{\ell\to\infty} \mu_{a,\ell}\}$ for 
arbitrary $a\in A^{\mathbb N}$.

\medskip\noindent
\paragraph{\em \underline{Third step: absolute summability}} 
To finish the proof, let us establish the validity of 
assumption~\eqref{eq:regularityhyp}. 
For this we use a computation very similar to the one we developed in 
the proof of Theorem~\ref{teo:projectiveapproximation}. 
For $n\in {\mathbb N}$ let $P_{\ell,n}:A^{L^\ell}\times A^L\to [0,1]$
the probability transition matrix such that 
\[
P_{\ell,n}(a,b)=
\sum_{s^{(1)}\cdots s^{(\ell)}(b)=a}\prod_{k=1}^\ell 
\nu\left[s^{(k)}\right]_{[(n-1)/L^{\ell}]L+1},
\]
which give the probability of obtaining $a\in A^{L^\ell}$ at position
$q\,L^\ell+1:=[(n-1)/L^{\ell}]L^\ell+1$, which is the first position in 
$B(n,\ell)$, starting from $b\in A^L$ at position $q\,L+1$. 
Notice that 
\[
e^{-\Phi_{n,B(n,\ell)}(a)}\equiv 
\frac{\mu_\nu\left[a_{B(n,\ell)}\right] 
    }{\prod_{k=0}^{L-1}
\mu_\nu\left[a_{B(q\,L^{\ell-1}(L+k)+1,\ell-1)}\right]
    }=\frac{\left(P_{\ell,n}\, v\right)(a)
          }{\left(P_{\ell,n}\, u\right)(a)}
\]
by taking $u,v:A^{L}\to (0,1)$ the marginals $u=\mu_{qL+1}^{qL+L}$ and
$v=(\mu_\nu)_{qL+1}^{qL+L}$ respectively. Since 
$\sum_{b\in A^L} P_{\ell,n}(a,b)>0$ for each $a\in A^{L^\ell}$, following
the same computations as in the proof of Theorem~\ref{teo:projectiveapproximation},
we obtain 
\[e^{-\Phi_{n,B(n,\ell)}(a)}\lessgtr e^{\pm L\,\rho(\mu,\mu_\nu)},\]
for all $a\in A^{L^\ell}$, therefore
$||\Phi_{n,B(n,\ell)}(a)||\leq K:=L\rho(\mu_\nu,\mu)$, 
and the proof is done.
\end{proof}


\bigskip
\section{Two-body interactions}\label{sec:two-bodyinteractions}

\medskip\noindent
\subsection{}
\noindent Let $A$ be a finite alphabet and for each $a\in A$ let
$\pi_a:A\to A$ be a permutation. With this define a collection of substitutions
$S=\{\sigma_b:A\to A^2:\ b\in A\}$ such that $\sigma_b(a)=(\pi_ab)a$ for
each $a,b\in A$. Now, let $\nu\in {\mathcal M}(S^{\mathbb N})$ be the 
Bernoulli measure such that $\nu[\sigma_b]=p_\nu(b)$, where $p_\nu:A\to (0,1)$ is
a positive probability vector. Finally, let $M_\nu:A\times A\to (0,1)$ be the
probability transition matrix given by
\begin{equation}\label{eq:exotransition}
M_\nu(a',a)=\sum_{b:\,\pi_a(b)=a'}p_\nu(b)\equiv p_\nu(\pi^{-1}_aa'),
\end{equation}
which is the probability of obtaining a word starting
with $a'$ by substitution of the letter $a$, which coincides in 
this case with the probability of obtaining the word $a'a$. This matrix corresponds 
to $M_{1,1}$ in the proof of Theorem~\ref{teo:decayofcorrelation}, therefore
the one-marginal $\mu_1$ of any ${\mathbb S}_\nu$-invariant state $\mu$ is 
given by the unique probability vector $q_\mu:A\to (0,1)$ satisfying
$M_\nu q_\mu=q_\nu$.

\medskip\noindent
\begin{proposition}[{\bf Primitivity}]
For $S$ and $\nu$ as before, the random substitution
${\mathbb S}_\nu$ is primitive.
\end{proposition}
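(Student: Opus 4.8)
The plan is to unfold the definition of primitivity directly, exploiting the very rigid structure of the length-$2$ substitution $\sigma_c(a)=(\pi_a c)\,a$. The decisive observation is that the image $\sigma_c(a)$ has its \emph{second} letter rigidly equal to $a$, while its \emph{first} letter $\pi_a c$ is \emph{completely free}: as $c$ ranges over $A$, and since $\pi_a$ is a permutation, $\pi_a c$ ranges over all of $A$. Because $\nu$ is fully supported, every substitution $\sigma_c$ is available, so at each position we may prescribe the first output letter to be any symbol we wish, by choosing $c=\pi_a^{-1}(\text{target})$, irrespective of the incoming letter $a$. Thus one application of a substitution string to a word $w$ produces a word whose odd-indexed positions are ours to choose and whose even-indexed positions are forced to reproduce $w$.

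Next I would iterate this observation and keep track of where the ``forced'' information coming from the starting word ends up. Writing $w^{(0)}=b$ and $w^{(k)}=s^{(k)}(w^{(k-1)})$, a straightforward induction on $k$ shows that the letter at position $p$ of $w^{(k)}$ is freely prescribable whenever $2^k\nmid p$, and is forced to equal $b_{p/2^k}$ whenever $2^k\mid p$. The inductive step is exactly the two-letter rule above: $w^{(k)}[2i-1]$ is free, chosen at step $k$, while $w^{(k)}[2i]=w^{(k-1)}[i]$, so the $2$-adic valuation of the position index increases by one precisely along the even branch. The only subtlety to check is that the free choices are mutually independent: each prescribable position of $w^{(k)}$ is controlled by a single, distinct choice of $c$ at a single step and position, so no consistency conflict arises, and each such choice is admissible because $\pi$ is a bijection.

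With this at hand the conclusion is immediate. Given $N\in{\mathbb N}$, choose $n_N$ so that $2^{n_N}>N$, for instance $n_N=\lfloor\log_2 N\rfloor+1$. For any $n\ge n_N$ and any $a,b\in A^N$, every position $p\in\{1,\dots,N\}$ satisfies $1\le p<2^n$, hence $2^n\nmid p$; therefore the whole length-$N$ prefix of $w^{(n)}=s^{(n)}\circ\cdots\circ s^{(1)}(b)$ is freely prescribable and may be set equal to $a$, regardless of $b$. This yields $s^{(n)}\circ\cdots\circ s^{(1)}(b)\sqsupseteq a$ and proves primitivity.

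The main obstacle is not any single hard estimate but the bookkeeping in the induction of the second step: one must correctly locate, after $n$ steps, the positions that still carry information about the initial word $b$, namely the multiples of $2^n$, and confirm that they have all been pushed beyond the first $N$ coordinates, while simultaneously verifying that the remaining coordinates are controlled by independent, admissible choices of the substitutions. Once this position-divisibility lemma is set up cleanly, primitivity follows with the explicit and short bound $n_N=\lfloor\log_2 N\rfloor+1$.
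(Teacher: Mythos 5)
Your proof is correct and takes essentially the same approach as the paper: both arguments rest on the observation that $\sigma_c(a)=(\pi_a c)\,a$ has its second letter forced and its first letter freely prescribable (since $\pi_a$ is a bijection), so that after $n$ steps the only positions still carrying information about $b$ are the multiples of $2^n$. The paper merely packages this differently — deflating $a$ onto its last letter $a_{2^\ell}$ in $\ell$ steps and steering the first letter of $b$ to $a_{2^\ell}$ during the remaining $m\geq 1$ steps — which, position by position, amounts to the same substitution choices and the same index $n_N=\ell+1$ for $N=2^\ell$ as your $2$-adic bookkeeping.
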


\begin{proof}
It is enough to prove that, for each $\ell\in{\mathbb N}$ there exists
$n_\ell\in {\mathbb N}$ such that for each $a,b\in 2^\ell$ and $N\geq n_\ell$,
there exists a sequence of substitutions 
$s^{(0)},s^{(1)},\ldots,s^{(n)}\in {S^{2^\ell}}$ such that
\begin{equation}\label{eq:requirement}
s^{(n)}\circ s^{(n-1)}\circ\cdots\circ s^{(1)}(b)\sqsupseteq a.
\end{equation}
For this notice that for each $m\in{\mathbb N}$ and $c\in A^{2^m}$,
\[
c=\sigma_{d_1}\sigma_{d_3}\cdots \sigma_{d_{2^m}-1}(c_2c_4\cdots c_{2^m}),
\]
where $\pi_{c_{2k}}(d_{2k-1})=c_{2k-1}$ for each $1\leq k\leq 2^{m-1}$. Hence,
by induction on $m$, we deduce the existence of substitutions
$\hat{s}^{(0)}\in S, \hat{s}^{(1)}\in S^2, \ldots, 
\hat{s}^{(\ell-1)}\in {S^{2^{\ell-1}}}$ such that
\[
a=
\hat{s}^{(\ell-1)}\circ \hat{s}^{(\ell-2)}\circ\cdots\circ \hat{s}^{(0)}(a_{2^\ell}).
\]
On the other hand, since for $c,c'\in A$ we have 
$c=(\sigma_{d}(c'))_1$, with $d=\pi_{c'}^{-1}(c)$, we deduce that for each $b_1\in A$
and $m\geq 1$ there exist a sequence
$\bar{s}^{0},\bar{s}^{(1)},\ldots,\bar{s}^{(m-1)} \in S$ such that 
\[a_{2^{\ell}}=
\left(
\bar{s}^{(m-1)}\circ\bar{s}^{(m-1)}\circ\cdots\circ \bar{s}^{(0)})(b_1))\right)_1.
\]
Finally, for each $n=\ell+m$, any sequences  
$s^{(0)},s^{(1)},\ldots,s^{(n)}\in {S^{2^\ell}}$ such that 
$s^{(k)}\sqsupseteq \bar{s}^{(k)}$ for $0\leq k\leq m-1$ and
$s^{(m+k)}\sqsupseteq \hat{s}^{(k)}$ for $0\leq k\leq \ell-1$, 
satisfies~\eqref{eq:requirement}. The proposition follows with $n_\ell=\ell+1$.
\end{proof}

\medskip\noindent
\subsection{}
For $S$ and $\nu$ as before, the ${\mathbb S}_\nu$-invariant state
can be computed explicitly. We have the following.

\medskip\noindent
\begin{proposition}[{\bf Invariant state}]\label{prop:exoinvariant}
For $S$ and $\nu$ as above, let $M_\nu:A\times A\to (0,1)$ be the 
one-marginal probability transition matrix defined in~\eqref{eq:exotransition} 
and let $q_\nu:A\to (0,1)$ its unique invariant probability vector. 
Then the unique ${\mathbb S}_\nu$-invariant state $\mu_\nu$ is such that
\[
\mu_\nu[a]=q_\nu\left(a_{2^\ell}\right)
\exp\left(\sum_{m=1}^\ell\sum_{k=1}^{2^{\ell-m}}
                \log p_\nu\left(\pi^{-1}_{a_{2^mk}} a_{2^{m-1}(2k-1)}\right) \right)
\]
for each $\ell\in {\mathbb N}$ and $a\in A^{2^\ell}$. Furthermore, 
$\mu_\nu$ is a Gibbs measure for the two-body interaction potential 
$\Phi=\{\Phi_{n,\Lambda}\}_{n\in{\mathbb N},\Lambda\in{\mathcal L}}$, with
\[
\Phi_{2^{m-1}(2k-1),\Lambda}(a)=\left\{\begin{array}{cl} 
           -\log p_\nu\left(\pi^{-1}_{a_{2^mk}} a_{2^{m-1}(2k-1)}\right)  & 
                        \text{ if } \Lambda=\{a_{2^{m-1}(2k-1)},a_{2^mk}\},\\
            0 & \text{ otherwise.}\end{array}\right.
\]
\end{proposition}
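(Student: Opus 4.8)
The plan is to obtain a closed form for $\mu_\nu$ by iterating the fixed-point equation ${\mathbb S}_\nu\mu_\nu=\mu_\nu$, and then to read the announced two-body potential directly off the logarithm of that formula, checking absolute summability and the Gibbs conditional probabilities by hand.

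\emph{Deriving the closed form.} The substitution $\sigma_b(a)=(\pi_ab)\,a$ has length $L=2$ and places the \emph{old} symbol $a$ in the even coordinate and the \emph{fresh} symbol $\pi_a(b)$ in the odd coordinate. Hence, given a target block $a\in A^{2^\ell}$, every pair $(s,b)$ with $s(b)\sqsupseteq a$ is forced: the even subword $a_2a_4\cdots a_{2^\ell}$ must be the pre-image $b$, and for each $k$ the substitution producing $a_{2k-1}a_{2k}$ must be $\sigma_{c_k}$ with $c_k=\pi_{a_{2k}}^{-1}(a_{2k-1})$, an event of $\nu$-probability $p_\nu(\pi_{a_{2k}}^{-1}a_{2k-1})$. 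Feeding this into the invariance equation~\eqref{eq:substitution} and marginalising the coordinates beyond $2^{\ell-1}$ gives the one-step recursion
\[
\mu_\nu[a_1^{2^\ell}]=\left(\prod_{k=1}^{2^{\ell-1}}p_\nu(\pi_{a_{2k}}^{-1}a_{2k-1})\right)\mu_\nu\!\left[a_2a_4\cdots a_{2^\ell}\right].
\]
Iterating this $\ell$ times — at stage $m$ one decimates the block $a_{2^{m-1}k}$ to its even subword $a_{2^mk}$, contributing the factors indexed by $k\le 2^{\ell-m}$ — and using that the one-marginal of $\mu_\nu$ is the Perron vector $q_\nu$ with $M_\nu q_\nu=q_\nu$, I recover exactly the stated product, the base case being $\mu_\nu[a_{2^\ell}]=q_\nu(a_{2^\ell})$.

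\emph{Identifying the interaction.} Taking $-\log$ of the closed form exhibits $-\log\mu_\nu[a_1^{2^\ell}]$ as a sum of one term per pair $\{2^{m-1}(2k-1),\,2^mk\}$ plus the single one-body boundary term $-\log q_\nu(a_{2^\ell})$. I take $\Phi$ to be the two-body potential assigning to each such pair the weight $-\log p_\nu(\pi_{a_{2^mk}}^{-1}a_{2^{m-1}(2k-1)})$, indexed by its odd-part (source) site $2^{m-1}(2k-1)$. Absolute summability is immediate: writing $v_2(n)$ for the $2$-adic valuation of $n$, each site $n$ is the source of exactly one pair, namely $\{n,\,n+2^{v_2(n)}\}$, and that potential is bounded by $\max_{c\in A}|\log p_\nu(c)|<\infty$ since $p_\nu$ is strictly positive on the finite alphabet.

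\emph{Checking the Gibbs conditionals.} Fix finite $\Lambda$ and take $N=2^\ell>\max\Lambda$. Writing the finite conditional $\mu_\nu([a_\Lambda]\,|\,a_{\{1,\dots,N\}\setminus\Lambda})$ as a ratio of the closed-form expressions, every pair potential with both endpoints outside $\Lambda$ cancels, and so does the boundary factor $q_\nu(a_{2^\ell})$ because $2^\ell$ lies outside $\Lambda$. What survives is precisely the sum of the $\Phi$-weights of the pairs meeting $\Lambda$; since each $n\in\Lambda$ is the source of one pair and the target of $v_2(n)$ pairs, only finitely many pairs survive, so the ratio is already independent of $\ell$ and equals the Gibbs specification of $\Phi$, giving $\mu_\nu\in{\mathcal G}(\Phi)$ in the limit $\ell\to\infty$. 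The delicate point — and the one I expect to be the main obstacle — is exactly this last piece of bookkeeping: because the coupling is hierarchical and of unbounded range (site $2^\ell$ always pairs with the distant $2^{\ell+1}$), one must be careful to collect at each site of $\Lambda$ both the pair for which it is the odd-part source and the pairs for which it is the even target, and to confirm that the one-body boundary factor genuinely escapes to infinity rather than contributing to the specification.
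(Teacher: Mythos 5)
Your proposal is correct and follows essentially the same route as the paper: you obtain the closed form from the one-step decimation identity $\mu_\nu[a]=\mu_\nu[a_2a_4\cdots a_{2^\ell}]\prod_{k}p_\nu\left(\pi^{-1}_{a_{2k}}a_{2k-1}\right)$, iterated down to the one-marginal $q_\nu$, and you verify Gibbsianness exactly as the paper does, by noting that in the finite-volume conditionals every pair factor not meeting $\Lambda$, as well as the boundary factor $q_\nu\left(a_{2^\ell}\right)$, cancels, leaving a ratio independent of the volume. Your explicit check of absolute summability and the source/target bookkeeping via the $2$-adic valuation are somewhat more detailed than the paper's one-line "depends only on couples intersecting $\Lambda$", but the underlying argument is the same.
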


\medskip\noindent
\begin{remark}
Notice that the two-body interaction of the previous statement does not
coincide with the one constructed in the proof of Theorem~\ref{teo:gibbsianness}, 
which in particular is positive for sets of arbitrarily large cardinality. Indeed, 
following the aforementioned construction, a straightforward computation gives 
the potential
\[
\Phi_{n,B(n,\ell)}(a)=\frac{1}{2^\ell}\left(\log q_\nu\left(a_{2^{\ell-1}(2q+1)}\right)
-\log p_\nu\left(\pi^{-1}_{a_{2^{\ell}(q+1)}}a_{2^{\ell}(q+1)}\right)\right),
\]
for $2^{\ell}q\leq n <2^{\ell}(q+1)$. It can be verified that 
both potentials determine the same Gibbs measure. Notice as well how the 
two-body interaction organizes in a hierarchical structure similar to a 
binary tree. Indeed, consider
the partition ${\mathbb N}=\sqcup_{m=0}^\infty L_m$, into level sets 
$L_m:=\{2^m(2k-1)\}_{k\in {\mathbb N}}$. Then, each site $2^m(2k-1)$ in the 
$m$-th level set interacts with the site $2^{m+1}k$ situated in an upper level, 
depending on binary decomposition of $k$, and with $2^{m-1}(4k-3)$ in the $(m-1)$-th
level. These kind of interactions were introduced by Dyson in~\cite{1969Dyson},
and since then they have been widely studied in the context of statistical mechanics.
\end{remark}

\medskip\noindent
\begin{proof}
The first claim follows by induction on $\ell$. For $\ell=0$ the claim
is clearly true since $\mu_\nu[a]_1=q_\nu(a)$ for each $a\in A$. Assuming that 
the claim holds for $\ell\geq 1$, and taking into account that 
$\mu_\nu[a]=\mu_\nu[a_2a_4\cdots a_{2^\ell}]\prod_{k=1}^{2^{\ell-1}}
p_\nu(\pi^{-1}_{a_{2k}}a_{2k-1})$ for all $a\in A^{2^\ell}$, then
\begin{align*}
\mu_\nu[a]&=q_\nu\left(a_{2^{\ell}}\right)
\exp\left(\sum_{m=1}^{\ell-1}\sum_{k=1}^{2^{\ell-1-m}}
                \log p_\nu\left(\pi^{-1}_{a_{2^{m+1}k}} a_{2^{m}(2k-1)}\right) \right)
                \exp\left(\sum_{k=1}^{2^{\ell-1}}  
        \log p_\nu\left(\pi^{-1}_{a_{2k}},a_{2k-1}\right) \right)  \\
        &=q_\nu\left(a_{2^{\ell}}\right)
\exp\left(\sum_{m=2}^{\ell}\sum_{k=1}^{2^{\ell-m}}
                \log p_\nu\left(\pi^{-1}_{a_{2^m k}} a_{2^{m-1}(2k-1)}\right) \right)
                \exp\left(\sum_{k=1}^{2^{\ell-1}}  
        \log p_\nu\left(\pi^{-1}_{a_{2k}} a_{2k-1}\right) \right)  \\
        &=q_\nu\left(a_{2^\ell}\right)
\exp\left(\sum_{m=1}^\ell\sum_{k=1}^{2^{\ell-m}}
                \log p_\nu\left(\pi{-1}_{a_{2^m k}}a_{2^{m-1}(2k-1)}\right) \right),
\end{align*}
and the claim follows.

\medskip\noindent
For the second claim it is enough to notice that for $\Lambda\in{\mathcal L}$,
and $N\geq 2\max\Lambda$, the value of 
$\mu_\nu\left(\left[a_\Lambda\right]|
\left[a_{\{1,\ldots,N\}\setminus\Lambda}\right]\right)$ 
does not depend of $N$. Indeed, this value depends only on terms involving couples 
$\{2^{m-1}(2k-1),2^m k\}$ intersecting $\Lambda$. 
A direct computation leads to
\begin{align*}
\lim_{N\to\infty}
\mu_\nu\left(\left[a_\Lambda\right]|
\left[a_{\{1,\ldots,N\}\setminus\Lambda}\right]\right)=
\frac{\exp\left(\sum_{2^{m-1}(2k-1),2^m k\}\cap\Lambda\neq\emptyset}
                \log p_\nu\left(\pi^{-1}_{a_{2^m k}}a_{2^{m-1}(2k-1)}\right)\right) }{
\sum_{c_\Lambda\in A^\Lambda}
\exp\left(\sum_{2^{m-1}(2k-1),2^m k\}\cap\Lambda\neq\emptyset}
   \log p_\nu\left(\pi^{-1}_{\hat{c}_{2^m k}}\hat{c}_{2^{m-1}(2k-1)}\right)\right)
     },
\end{align*}
where $\hat{c}_n=c_n$ if $n\in \Lambda$ and $\hat{c}_n=a_n$ otherwise,
and the claim follows.
\end{proof}

\medskip\noindent
\subsection{}
We can supply an explicit estimation of the decay of correlations
for the two-body interactions we are analyzing here. 
For $\nu$ as before, let $p_{\min}:=\min_{a\in A}p_\nu(a)$, 
$p_{\max}=\max_{a\in A}p_\nu(a)$, and $\Delta p_{\nu}:=p_{\max}-p_{\min}$. 
We have the following. 

\medskip\noindent
\begin{proposition}[{\bf Decay of correlations}]
For $S$ and $\nu$ be as before, there exists $C > 0$ such that
\[
1-C\, n^{-|\log_2\Delta p_\nu|}\leq 
\frac{\mu_\nu([a]_1\cap [b]_n)}{\mu_\nu[a]_1\mu_\nu[b]}
\leq 1+C\, n^{-|\log_2\Delta p_\nu|}
\]
for all $n\in {\mathbb N}$ and each $a,b\in A$.
\end{proposition}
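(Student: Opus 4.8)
The plan is to specialize the argument of Theorem~\ref{teo:decayofcorrelation} to the case at hand, where the length is $L=2$ and the one-marginal transition matrix $M_{1,1}$ of that proof is exactly the matrix $M_\nu$ of~\eqref{eq:exotransition}, and then to track every constant explicitly in terms of $p_\nu$. A preliminary observation is that all one-dimensional marginals of $\mu_\nu$ coincide with $q_\nu$: indeed, since $\sigma_b(a)=(\pi_a b)\,a$ makes the right child of a symbol a faithful copy of that symbol and its left child $M_\nu$-distributed, a short induction on the position---using $M_\nu q_\nu=q_\nu$---gives $\mu_\nu[b]_n=q_\nu(b)$ for every $n$, so the denominator $\mu_\nu[a]_1\,\mu_\nu[b]$ in the statement is unambiguous. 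Writing $q=\lfloor\log_2(n-1)\rfloor$ and conditioning on the symbol $c$ at the common ancestor of sites $1$ and $n$ (the first site of the generation-$q$ block containing both), the base-$2$ decomposition used in Theorem~\ref{teo:decayofcorrelation} gives
\[
\mu_\nu([a]_1\cap[b]_n)=\sum_{c\in A}M_\nu^{q}(a,c)\,W_n(b,c),
\qquad
\mu_\nu[b]_n=\sum_{c\in A}W_n(b,c),
\]
where $W_n(b,c)\geq 0$ carries the contribution of the path from the ancestor down to site $n$, and where the law of the leftmost descendant, site $1$, conditioned on the ancestor $c$ is precisely $M_\nu^{q}(\cdot,c)$.

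Dividing the two identities and writing $M_\nu^{q}(a,c)=q_\nu(a)\,(1+\varepsilon_{a,c})$, the nonnegativity of the weights $W_n(b,c)$ yields at once
\[
\Bigl|\frac{\mu_\nu([a]_1\cap[b]_n)}{\mu_\nu[a]_1\,\mu_\nu[b]}-1\Bigr|
\leq \max_{c\in A}\bigl|\varepsilon_{a,c}\bigr|
=\max_{c\in A}\Bigl|\frac{M_\nu^{q}(a,c)}{q_\nu(a)}-1\Bigr|,
\]
so the whole problem reduces to the rate at which $M_\nu^{q}$ approaches its rank-one limit, the stochastic matrix all of whose columns equal $q_\nu$.

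The decisive step is to show that this rate is governed by $\Delta p_\nu$. Here I would use the special structure revealed by~\eqref{eq:exotransition}: the $a$-th column of $M_\nu$ is the vector $p_\nu$ read through the permutation $\pi_a$, so all columns share the single multiset of entries $\{p_\nu(a):a\in A\}$ and in particular $M_\nu(a',a)\geq p_{\min}$ everywhere. Acting on a zero-sum vector $d$ gives $(M_\nu d)(a')=\sum_{a}d_a\,p_\nu(\pi_a^{-1}a')$ with $\sum_a d_a=0$, and the aim is to bound $\|M_\nu d\|_1\leq \Delta p_\nu\,\|d\|_1$ by comparing the permuted copies of $p_\nu$ pairwise; iterating then gives $\|M_\nu^{q}(\cdot,c)-q_\nu\|_1\leq C'\,(\Delta p_\nu)^{q}$, and since $q_\nu$ is bounded below, $\max_{c}|\varepsilon_{a,c}|\leq C''\,(\Delta p_\nu)^{q}$. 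Finally, because $q=\lfloor\log_2(n-1)\rfloor\geq\log_2 n-2$ and $\Delta p_\nu\in[0,1)$, one has $(\Delta p_\nu)^{q}\leq (\Delta p_\nu)^{-2}\,n^{\log_2\Delta p_\nu}=(\Delta p_\nu)^{-2}\,n^{-|\log_2\Delta p_\nu|}$; absorbing $(\Delta p_\nu)^{-2}$ and $C''$ into a single constant $C$ delivers the two-sided bound in the statement.

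I expect the contraction estimate to be the main obstacle. The coalescence argument behind Theorem~\ref{teo:decayofcorrelation} already furnishes \emph{some} geometric rate $\eta^{q}$, but pinning that rate to $\Delta p_\nu$ is delicate: the total-variation contraction coefficient of $M_\nu$ equals $\max_{\theta}\tfrac12\sum_{a}|p_\nu(a)-p_\nu(\theta a)|$, the maximum running over the permutations $\theta=\pi_{a'}^{-1}\pi_a$, and this coincides with $\Delta p_\nu$ for a binary alphabet but requires care in general. The crux is thus to extract exactly the factor $\Delta p_\nu$ from the permutation structure of the columns of $M_\nu$, rather than settling for the coarser Dobrushin or Birkhoff coefficient supplied by the general theorem.
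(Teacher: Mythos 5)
Your skeleton is the paper's own: your pair of identities is exactly the paper's Eq.~\eqref{eq:partialquot}, with $W_n(b,c)=M_\nu^{k(n)}(b,c)\,q_\nu(c)$ coming from conditioning on the symbol at the common dyadic ancestor of sites $1$ and $n$, and the subsequent reduction (bound $\max_c|M_\nu^{q}(a,c)/q_\nu(a)-1|$, then convert a geometric rate $(\Delta p_\nu)^{q}$ into $n^{-|\log_2\Delta p_\nu|}$) is also the paper's, which uses $\sum_{a'}M_\nu^{k(n)}(b,a')q_\nu(a')=q_\nu(b)$ in place of your nonnegative-weights averaging. Two small slips: the ancestor is the \emph{last} site $2^{\ell(n)}$ of the dyadic block, not the first, and its generation is $\ell(n)=\lceil\log_2 n\rceil$ rather than $\lfloor\log_2(n-1)\rfloor$ (with your $q$ the block does not contain site $n$); both only affect constants.

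The genuine gap is the step you yourself single out as the crux and never prove: the contraction $\|M_\nu d\|_1\le\Delta p_\nu\|d\|_1$ on zero-sum vectors. This is not merely a point that ``requires care'': it is false for $\#A\ge 4$, and so is any route to the rate $(\Delta p_\nu)^{q}$. Take $A=\{1,2,3,4\}$, $p_\nu=(0.4,0.05,0.35,0.2)$, so $\Delta p_\nu=0.35$, and let $\pi_1=\pi_3=\mathrm{id}$ while $\pi_2=\pi_4$ is the cycle $1\mapsto2\mapsto3\mapsto4\mapsto1$. By~\eqref{eq:exotransition} the columns of $M_\nu$ alternate between $u=(0.4,0.05,0.35,0.2)$ and $v=(0.2,0.4,0.05,0.35)$. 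Then $\tfrac12\|u-v\|_1=0.5>\Delta p_\nu$, so the one-step total-variation bound fails; worse, $M_\nu(u-v)=\tfrac12(u-v)$, so $1/2$ is an eigenvalue of $M_\nu$ and $M_\nu^{q}(\cdot,c)-q_\nu$ decays exactly like $2^{-q}$, not like $(0.35)^{q}$. Consequently no argument can close your gap at this level of generality; indeed, along $n=2^\ell$ the displayed inequality of the proposition itself fails for this choice of permutations. What is true is precisely your parenthetical remark: for $\#A=2$ the only nontrivial permutation $\theta=\pi_{a'}^{-1}\pi_a$ is the swap, the total-variation coefficient equals $\Delta p_\nu$, and your proof then closes. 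For comparison, the paper handles this same step with Birkhoff's Hilbert-metric coefficient, via $\delta_\nu\ge p_{\min}/p_{\max}$ and hence $\tau_\nu\le(p_{\max}-p_{\min})/(p_{\max}+p_{\min})$; its further claim $\tau_\nu\le\Delta p_\nu$ silently uses $p_{\max}+p_{\min}\ge 1$, so it too is really a two-letter-alphabet argument. Your diagnosis of where the difficulty sits is exactly right, but as a proof of the stated proposition the proposal is incomplete, and completable only when $\#A=2$ or under extra hypotheses on the permutations $\pi_a$.
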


\medskip
\begin{proof}
For each $n\in{\mathbb N}$, let $\ell(n)=\min\{\ell\in{\mathbb N}:\ 2^\ell\geq n\}$,
and write $n=\sum_{i=0}^{\ell(n)}\epsilon_k 2^k$ using the binary expansion with 
$\ell(n)+1$ digits. Define $k(n)=\sum_{i=0}^{\ell(n)}(1-\epsilon_i)\, \mod\ell(n)$,
which is zero in the case $n=2^{\ell(n)}$ or the number of zeros in the binary expansion
of $n$ with $\ell(n)+1$. It is easy to check that $k(n)$ is the number of interactions 
increasing in level, needed to reach site $2^{\ell(n)}$ starting from site $n$. 
It is even easier to verify that $\ell(n)$ is the number of interactions in a
path leading from site 1 to site $2^{\ell(n)}$. 
Using the explicit expression for
$\mu_\nu[a]$ obtained in Proposition~\ref{prop:exoinvariant}, and adding up all 
the interactions $M_\nu(a_{2^{m-1}(2k-1)},a_{2^mk})$ disconnected
from sites $1$ or $n$, we obtain 
\begin{equation}\label{eq:partialquot}
\frac{\mu_\nu([a]_1\cap [b]_n)}{\mu_\nu[a]_1\mu_\nu[b]}=
\frac{\sum_{a'\in A} M^{\ell(n)}_\nu(a,a')M^{k(n)}_\nu(b,a')q_\nu(a')}{q_\nu(a)q_\nu(b)},
\end{equation}
for each $a,b\in A$.

\medskip\noindent
Now, according to Birkhoff's version of Perron-Frobenius Theorem, and taking into account 
that $M_\nu >0$, we have
\[
M^{\ell(n)}_\nu(a,a')=
\left(M^{\ell(n)-1}X_{a'}\right)(a)
\lessgtr q_\nu(a)\exp\left(\pm\frac{d\left(X_{a'},M_\nu X_{a'}\right)
                                  }{\tau_\nu-\tau_\nu^2}\tau_\nu^{\ell(n)}\right),
\]
where $X_{a'}: A\to (0,1)$ is the probability vector such that $X_{a'}(c)=
M_\nu(a',c):= p_\nu(\pi^{-1}_{c}(a'))$,
$\tau_\nu\in [0,1)$ is Birkhoff's contraction coefficient of $M_\nu$ and $d(X,Y)$ is
Hilbert's projective distance between probability vectors $X,Y:A\to (0,1)$, 
\[
\hat{\rho}(X,Y)=\log\left(\max_{c,c'\in A}\frac{X(c)Y(c')}{X(c')Y(c)}\right).
\]
In our case we have explicit bounds for all the terms involved. Indeed, 
$\tau_\nu:=(1-\delta_\nu)/(1+\delta_\nu)$ with
\[
\delta_\nu :=\min_{a,b,c,d\in A}\sqrt{\frac{M_\nu(a,b)M_\nu(c,d)}{M_\nu(a,d) M_\nu(c,b)}}
           \leq \frac{p_{\min}}{p_{\max}}, 
\]
therefore $\tau_\nu\leq \Delta p_\nu= p_{\max}-p_{\min}$ and 
$\Delta p_\nu-\Delta p_\nu^2 \geq \tau_\nu-\tau_\nu^2$. On the other hand,
\[
\hat{\rho}\left(X_{a'},M_\nu X_{a'}\right)=
\log\left(\max_{c,c'\in A}
\frac{p_\nu(\pi_c^{-1}(a')
     \sum_{\hat{c}\in A} p_\nu(\pi_a^{-1}(\hat{c}))p_\nu(\pi_{\hat{c}}^{-1}(c'))
                        }{p_\nu(\pi_{c'}^{-1}(a'))
    \sum_{\hat{c}\in A} p_\nu(\pi_a^{-1}(\hat{c}))p_\nu(\pi_{\hat{c}}^{-1}(c))}\right)
\leq 2\log\left(\frac{p_{\max}}{p_{\min}}\right).                                  
\]

\medskip\noindent From the above computations it follows that 
\[
M^{\ell(n)}_\nu(a,a')
\lessgtr q_\nu(a)\exp\left(\pm 2\frac{\log(p_{\max}/p_{\min})
               }{\Delta p_\nu-(\Delta p_\nu)^2}(\Delta p_\nu)^{\ell(n)}\right).
\]
By taking $C > 0$ sufficiently large, we obtain
\[
M^{\ell(n)}_\nu(a,a')
\lessgtr q_\nu(a)\left(1\pm C_0\,\Delta p_\nu^{\ell(n)}\right). 
\]
Finally, since $\sum_{a'\in A}M^{k(n)}_\nu(b,a')q_\nu(a')=q_\nu(b)$ for $k(n)$ arbitrary,
and $\ell(n)\geq \log_2 n$, if follows that
\[
\frac{\mu_\nu([a]_1\cap [b]_n)}{\mu_\nu[a]_1\mu_\nu[b]}=
\frac{\sum_{a'\in A} M^{\ell(n)}_\nu(a,a')M^{k(n)}_\nu(b,a')q_\nu(a')}{q_\nu(a)q_\nu(b)}
\lessgtr 1\pm  C\, n^{-|\log_2 \Delta p_\nu|}.
\]
\end{proof}

\medskip\noindent
\begin{remark}
The constant $C$ in the previous proposition can be explicitly bounded 
as follows. By convexity, and taking into account that $(\Delta p_\nu)^{\ell(n)}\in (0,1)$ 
for all $n\in {\mathbb N}$, then  
\begin{align*}
\exp\left(2\frac{\log(p_{\max}/p_{\min})
                }{\Delta p_\nu-(\Delta p_\nu)^2}\Delta p_\nu^{\ell(n)}
     \right)
&\leq 1+\left(\exp\left(2\frac{\log(p_{\max}/p_{\min})}{\Delta p_\nu-(\Delta p_\nu)^2}
                   \right)-1\right)(\Delta p_\nu)^{\ell(n)}\\
\exp\left(-2\frac{\log(p_{\max}/p_{\min})
                 }{\Delta p_\nu-(\Delta p_\nu)^2}\Delta p_\nu^{\ell(n)}\right)
&\geq 1-2\frac{\log(p_{\max}/p_{\min})
             }{\Delta p_\nu-(\Delta p_\nu)^2}(\Delta p_\nu)^{\ell(n)}\\
&\geq 1-\left(\exp\left(2\frac{\log(p_{\max}/p_{\min})
           }{\Delta p_\nu-(\Delta p_\nu)^2}\right)-1\right)(\Delta p_\nu)^{\ell(n)}.
\end{align*}
Hence, it is enough to take 
\[C=\left(\exp\left(2\frac{\log(p_{\max}/p_{\min})
           }{\Delta p_\nu-(\Delta p_\nu)^2}\right)-1\right)=
    \left(
    \left(\frac{p_{\max}}{p_{\min}}\right)^{\frac{2}{\Delta p_\nu-(\Delta p_\nu)^2}}
 -1\right).    
\]
\end{remark}

\medskip\noindent 
\begin{remark}[Ising-like interaction]
Although the polynomial law obtained in Theorem~\ref{teo:decayofcorrelation} is only an
upper bound for the decay of correlations, there is an example where it gives the exact decay rate. 
For this consider the particular case of a two-body interaction in $A={\mathbb Z}_2$ 
corresponding to random substitution defined by
$S=\{\sigma_a(b)= b+a \mod 2:\,  a,b\in{\mathbb Z}_2\}$ and the Bernoulli measure 
$\nu[a]_n=p_\nu(a)$ for each $a\in {\mathbb Z}_2$ and $n\in{\mathbb N}$. 
The ${\mathbb S}_\nu$-invariant state is the unique Gibbs measure for the two-body potential
\[
\Phi_{n,\{n,n'\}}(a)=\left\{\begin{array}{cl}
       -\log(p) & \text{ if } \{n,n'\}=\{2^{m-1}(2k-1),2^mk\} \text{ and } a_n=a_{n'},\\
      -\log(1-p) & \text{ if } \{n,n'\}=\{2^{m-1}(2k-1),2^mk\} \text{ and } a_n\neq a_{n'},\\
                     0 & \text{ otherwise.}
\end{array}\right.
\]
For these substitutions, the resulting one-marginal transition matrix $M_\nu$ 
is double-stochastic, therefore $q_\nu=[1/2, 1/2]^\dag$. It is also symmetric with
spectrum $\lambda_0=1 > \lambda_1=2p-1$. 
An easy computation gives
\[
M_\nu^{\ell(n)}=\frac{1}{2}\left(
\begin{matrix} 1+(2p-1)^{\ell(n)}&1-(2p-1)^{\ell(n)}\\1-(2p-1)^{\ell(n)}&1+(2p-1)^{\ell(n)} 
\end{matrix}\right).   
\]
Finally, using Equation~\eqref{eq:partialquot},
which holds for the general two-body interaction, it follows that
\[
\left|\frac{\mu_\nu([a]_1\cap [b]_n)}{\mu_\nu[a]_1\mu_\nu[b]}-1\right|=
|2p-1|^{\ell(n)}\in
\left[\Delta p_\nu n^{-|\log_2\Delta p_\nu|},n^{-|\log_2\Delta p_\nu|}\right].
\] 
\end{remark}


\bigskip
\section{Final comments}
\subsection{}
In Theorem~\ref{teo:projectiveapproximation}, the speed of projective convergence 
depends on the Birkhoff's coefficient and the primitivity index of the transition 
matrices $M_N$. This convergence could be, in principle, very slow. For the random
substitutions studied in Section~\ref{sec:two-bodyinteractions}, the primitivity 
index $m_N$ of the matrix $M_N$ is of the order of $\log N$, while its Birkhoff's
contraction coefficient is of the order of $1-(p_{\min})^{N}$, with this we obtain
a convergence of the order of $(\log \ell)^{-1}$. Indeed, a more precise computation 
leads to
\[
\rho(\mu^{(\ell)},\mu_\nu)\leq 
\frac{(\epsilon + \log(p_{\max}/p_{\min}))\max_{a,b\in A}|\log(p_\nu(a)/q_\nu(b))|}
     {\log(\ell)},
\]
for every $\epsilon >0$ and all $\ell$ sufficiently large.

\medskip\noindent
\subsection{}
As mentioned before, the polynomial bound for the decay of correlations holds for 
non-constant length substitutions. The proof of Theorem~\ref{teo:decayofcorrelation}
can be adapted to the case $1<\ell_S<L_S$. In this case $\mu_\nu([a]_1\cap [b]_n)$ is
determined by $\mu_\nu[c]_1$ after $\log n/\log \ell_S+1$ iterations of the random
substitution. In the chain of substitutions, the paths connecting site $1$ and site
$n$ become independent after $\log n/\log \ell_S-\log n/\log L_S+1$ iterations, and from
this we obtain a bound
\[ 
\left|\frac{\mu_\nu([a]_1\cap [b]_n)}{\mu_\nu[a]_1\,\mu_\nu[b]_n}
                -1\right|={\mathcal O}\left(n^{-\gamma}\right).
\]
with $\gamma$ and $C$ as in the referred theorem.

\medskip\noindent On the other hand, the ${\mathbb S}_\nu$-invariant
state does not appear to be a Gibbs measure for general non-constant length substitutions. 
In this case, the iterates $\mu^{\ell}:={\mathbb S}_\nu^{\circ \ell}$ are not Markovian,
and the projective convergence cannot be ensure. It would be interesting to exhibit a
concrete example where non-Gibbsianness can be established.

\medskip\noindent
\subsection{}
Our setting and several of the outcoming results can be adapted to substitutions on 
infinite graphs other that ${\mathbb N}$. We can consider, for instance, 
constant-volume substitutions in ${\mathbb N}^d$, replacing a letter by a rectangular 
array, and carry on, {\it mutatis mutandis}, all the preceding computations. 
For other infinite graphs or for variable-volume substitutions, further considerations
have to be taken into account.

\section{Acknowledgements}
\noindent We thank CONACyT-M\'exico and Fundaci\'on Marcos Moshinsky 
for their financial support 
through grant CB-2014-237324-F and through ``C\'atedra Marcos Moshinsky 2016'' 
respectively. 
The final stage of the work was done during a visit of E.U. on CPhT-\'Ecole 
Polytecnique, during which he benefited form the financial support of 
\'Ecole Polytechnique, 
and the hospitality and scientific advise of Prof. J.-R. Chazottes. 
C.M. thanks the Instituto de F\'\i sica UASLP for the warm hospitality during 
a one-month visit at the early stage of this work which was partially 
supported by the CONICYT-FONDECYT Postdoctoral Grant No. 3140572.
\newpage
\appendix

\section{Primitivity}\label{primitivity}

\medskip\noindent Let $S:=\{\sigma:A\to A+\}$ be a collection of substitutions. 
For ${\mathcal L} \subset A^+$ and $n\in {\mathbb N}$, let
\[
S^{\circ n}({\mathcal L}):=\left\{b=s^{(n)}\circ s^{(n-1)}\circ\cdots\circ s^{(1)}(a):
\ a\in {\mathcal L},\, s^{(1)},s^{(2)},\ldots,s^{(n)}\in S^+\right\}.
\] 
For ${\mathcal L},{\mathcal L}'\subset A^+$, we will say that 
${\mathcal L}'\sqsupseteq {\mathcal L}$ whenever for each $a'\in {\mathcal L}'$ 
there exists $a\in {\mathcal L}$ such that $a'\sqsupseteq a$. Using this notation
we can reformulate our definition of primitivity. We clearly have that
$S$ is primitive if for each $N$ there exists $n_N$ such that 
\[
S^{\circ n}(\{c\})\sqsupseteq A^N,
\]
for each $c\in A^N$.

\medskip\noindent For a collection $S:=\{\sigma:A\to A^+\}$ of substitutions,
{\bf\em the one-symbol transition matrix} $M_S\in M_{A\times A}(\{0,1\})$ is given by
\[
M_S(a,b)=\left\{\begin{array}{ll} 1 &\text{ if } S(a)\sqsupseteq\{b\},\\   
                                    0 &\text{ otherwise.}\end{array}\right.
\] 

\medskip\noindent 
We will say that $a\in A$ is a {\bf\em sliding symbol}, with respect to $S$, if 
$a^{p+1}\in S(\{a\})$ for some $p\geq 1$, $S(\{a\})\sqsupseteq  A$,
$S(A)\supset A^q$ for some $q\geq 1$.

\medskip\noindent We have the following.

\medskip\noindent
\begin{proposition} 
For a collection of substitutions $S:=\{\sigma:A\to A^+\}$ 
to be primitive, it is enough that the one-symbol transition 
matrix $M_S$ be primitive and that there exist a sliding symbol $a\in A$. 
\end{proposition}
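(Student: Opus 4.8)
The plan is to establish the reformulated primitivity criterion stated at the opening of this appendix: for each $N$ I must exhibit $n_N$ so that $S^{\circ n}(\{c\})\sqsupseteq A^N$ for every $c\in A^N$ and every $n\ge n_N$. Because a substitution acts coordinate-wise by concatenation, the length-$N$ prefix of $s^{(n)}\circ\cdots\circ s^{(1)}(c)$ depends only on the image of the first coordinate $c_1$, as soon as that image is at least $N$ symbols long. It therefore suffices to prove the single-letter statement: for every $N$, every $c\in A$, every target $t\in A^N$, and every sufficiently large $n$, there is an $n$-step chain with $s^{(n)}\circ\cdots\circ s^{(1)}(c)\sqsupseteq t$; the case of a word source $c\in A^N$ follows at once by applying this to $c_1$.

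To build such a chain I would combine three moves drawn from the two hypotheses. The leading symbol of an image evolves as a Markov chain driven by $M_S$, since the first symbol of $s(w)$ equals that of the substitution applied to the first letter $w_1$; as $M_S$ is primitive we have $M_S^{k}>0$ for all $k\ge k_0$, so for any such $k$ the leading symbol can be routed from $c$ to the sliding symbol $a$ in exactly $k$ steps. Next, the relation $a^{p+1}\in S(\{a\})$ with $p\ge1$ lets the first coordinate inflate a leading $a$ into a block $a^{m}$ with $m=(p+1)^{j}$ for any $j$, so the prefix may be made an arbitrarily long string of sliding symbols while the remaining coordinates substitute freely and affect only positions beyond $N$. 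Finally, $S(A)\supset A^{q}$ means every word whose length is a multiple of $q$ is a concatenation of single-letter one-step images, so an appropriate seed word, one step away, unfolds into a word having $t$ as a prefix; together with $S(\{a\})\sqsupseteq A$, which lets a leading $a$ branch into any prescribed symbol, this is what ultimately installs $t$.

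Assembling these, for $q\ge2$ I would run the decoding backwards: padding $t$ to length $q^{h}$ with $h=\lceil\log_q N\rceil$ and repeatedly replacing each length-$q$ block by a single-letter preimage exhibits one letter $c_t$ together with an $h$-step chain whose image starts with $t$; one branching step $a\to c_t$ links this to the sliding symbol, and prefixing a routing move of length $k\ge k_0$ produces a chain of total length $k+1+h$. Since $k$ ranges over all integers $\ge k_0$ and $h$ is the same for every $t\in A^{N}$, every $n\ge n_N:=k_0+1+h$ is realized uniformly in the target, which is the desired criterion; the growth move is what supplies the length and the additional step-count slack needed when $q$ is small, in particular when block-decoding alone fails to shorten the target.

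I expect the decoding to be the main obstacle. Branching a leading $a$ through $S(\{a\})\sqsupseteq A$ controls only the first symbol of each image, so if every substitution strictly lengthens letters the prescribed symbols arrive interleaved with uncontrolled material and need not spell $t$. The point of the hypothesis $S(A)\supset A^{q}$ is precisely to bypass this interleaving by realizing entire blocks as single-letter images rather than assembling $t$ symbol by symbol; the care lies in checking that the backward block-decomposition terminates at a single letter in a number of steps uniform over all targets of a given length, and that these step counts can be padded by the routing move so as to cover every large $n$.
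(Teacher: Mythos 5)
Your reduction to a single-letter source, the routing move via primitivity of $M_S$, and the backward block-decoding are sound, and for $q\ge 2$ your assembly (route in $k\ge k_0$ steps, branch to the decoded letter $c_t$, then decode in $h=\lceil\log_q N\rceil$ exact steps) does establish the criterion with $n_N=k_0+1+h$. This is essentially the paper's mechanism, organized as a direct backward decomposition rather than an induction on $N$; in fact your intermediate pass through the sliding symbol is redundant there, since primitivity of $M_S$ lets you route the leading letter straight to $c_t$.

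The genuine gap is the case $q=1$, which the definition of sliding symbol explicitly allows ($S(A)\supset A^q$ for \emph{some} $q\ge 1$) and which you flag but do not resolve. When $q=1$, the backward block-decomposition of a length-$N$ target never shortens it: each backward step returns another word of length $N$, so no single letter $c_t$ is ever produced and there is nothing to branch to. Your proposed fix --- that the growth move $a\mapsto a^{p+1}$ ``supplies the length and the additional step-count slack'' --- is not an argument and does not close this case: inflation yields a word with leading block $a^m$, $m\ge N$, but turning that block into an arbitrary prefix $t_1\cdots t_N$ runs exactly into the interleaving problem you yourself describe, since branching controls only the first symbol of each coordinate's image and those images may have length greater than one. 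The missing idea, which is how the paper's proof (Claim (B), second case) handles $q=1$, is to exploit $A\subset S(A)$ differently: every letter then has an \emph{exact} single-letter preimage chain of every length, so one can induct on the target length --- after prefix-reaching the word of $n_0$-fold single-letter preimages of $t_1\cdots t_N$, apply those exact chains to the first $N$ coordinates (they stay in place, with no interleaving) while the uncontrolled $(N+1)$-th coordinate is routed to $t_{N+1}$ in the same $n_0$ steps using primitivity of $M_S$. Without this step your proof only covers families with $S(A)\supset A^q$ for some $q\ge 2$; for instance, $A=\{a,b\}$ with $\sigma_1(a)=aa$, $\sigma_1(b)=b$, $\sigma_2(a)=b$, $\sigma_2(b)=a$ satisfies all the hypotheses but only with $q=1$, and there your decoding never gets off the ground.
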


\begin{proof}
The result is a direct consequence of these two claims. 
\begin{itemize}
\item[(A)] Let $n_0$ be the primitivity index of $M_S$.
For each $N\in {\mathbb N}$, $c_1^N\in A_1^N$ and $n\geq n_0+\log_{1+p}(N)$, 
$S^{\circ n}(\{c_1^N\})\sqsupseteq \{a^N\}$.
\item[(B)] For each $N$ and $n\geq \max(N,n_0)$, 
$S^{\circ n}(\{a^N\})\sqsupseteq A^N$.
\end{itemize}

\medskip \noindent Claim (A) follows from the primitivity of $M_S$  
and the fact that $a^{p+1}\in S(a)$. 

\medskip \noindent Claim (B) can be easily proved 
by induction. Indeed, by hypothesis $S(\{a\})\sqsupseteq A$, which ensures the 
validity of the claim form $N=1$. 
Assuming $S^{\circ n}(\{a^N\})\sqsupseteq A^N$
for all $n\geq \max(N,n_0)$, if $q>1$, then necessarily 
$S^{\circ (n+1)}(\{a^{N+1}\})\sqsupseteq S(A^N)\supset A^{qN}\sqsupseteq A^{N+1}$.
If on contrary $q=1$, then
\[
S^{\circ (n+n_0)}(\{a^{N+1}\})\sqsupseteq S^{\circ n}(A^N)S^{\circ n_0}(\{c\})
=A^NS^{\circ n_0}(\{c\})
\]
for some $c\in A$. Finally, by primitivity of $M_S$, we have 
$S^{\circ n_0}(\{c\})\sqsupseteq A$, and the claim (B) follows.

\medskip\noindent 
The proposition follows from claims (A) and (B), with primitivity index 
$n_N=N+2n_0+\log_{1+p}(N)$.

\end{proof}

\medskip\noindent\begin{remark} 
Let us remark that while the condition ``$M_S$ is primitive'' is 
necessary for the primitivity of $S$, the other condition, 
``there exist a sliding symbol'', does not seem to be necessary. 
The conditions on the previous proposition are satisfied, for instance
for the substitutions leading to two-body interaction potentials considered in
Section~\ref{sec:two-bodyinteractions} and the non-constant length substitution 
studied in~\cite{2013SalgadoUgalde}
\end{remark}

\section{Boundedness}\label{boundedness}

\medskip\noindent Let $S:=\{\sigma:A\to A+\}$ be a primitive constant-length 
substitution of length $L > 1$. For each $a\in A$ and $1\leq j\leq L$, let
$S(a)_j:=\{\sigma(a)_j\in A:\ \sigma\in S\}$.
We will say that ${\mathcal S}$ has {\bf \em bundle structure} if for each 
$a\in A$ we have $S(a)=\prod_{j=1}^LS(a)_j$. 

\medskip\noindent Let $\nu\in{\mathcal M}(A^{\mathbb N})$ 
be a product measure, and for each $n\in{\mathbb N}$, let $\nu_n$ 
denote the one-marginal at position $n$. We 
will say that $\nu$ has {\bf\em bounded dispersion} if
$\sup_{n\in{\mathbb N}}\rho(\nu_1,\nu_n) < \infty$.

\medskip\noindent 
\begin{remark}
Notice that the random substitutions leading to a two-body interaction
potential have bundle structure, and obviously any Bernoulli measure has bounded
dispersion.
\end{remark}

\medskip\noindent
\begin{proposition} 
Let $S$ be a primitive constant-length substitution of length $L > 1$ and 
$\nu\in {\mathcal M}(S^{\mathbb N})$ a fully supported product measure. 
Let $\mu_\nu$ the unique ${\mathbb S}_\nu$-invariant state and 
$\mu$ the unique product measure such that $\mu[a]_n=\mu_\nu[a]_n$ 
for each $n\in {\mathbb N}$ and $a\in A$. If $S$ has bundle structure and
$\nu$ has bounded dispersion then $\rho(\mu,\mu_\nu)<\infty$.
\end{proposition}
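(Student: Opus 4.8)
The plan is to reduce everything to a linear-in-$N$ bound on the log-ratio of cylinder probabilities and then obtain that bound by peeling off one substitution level at a time. Since $\mu$ is the product measure with the one-marginals of $\mu_\nu$, we have $\mu[a]=\prod_{n=1}^N\mu_\nu[a_n]_n$ for every $a\in A^N$, so by the definition of the projective distance it suffices to show that $\left|\log\mu_\nu[a]-\log\mu[a]\right|\le C\,N$ with $C$ independent of $N$ and $a$. I will first treat words of length $L^\ell$ and set $D_\ell:=\sup_{a\in A^{L^\ell}}\left|\log(\mu_\nu[a]/\mu[a])\right|$; a word of arbitrary length $N$ with $L^{\ell-1}<N\le L^{\ell}$ is handled at the end by writing $\mu_\nu[a]$ and $\mu[a]$ as sums of their length-$L^\ell$ extensions and noting that a ratio of sums is squeezed between the extreme ratios of the summands, so $\left|\log(\mu[a]/\mu_\nu[a])\right|\le D_\ell$.

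The core is a recursion $D_\ell\le D_{\ell-1}+C_2\,L^{\ell-1}$. Write $a\in A^{L^\ell}$ in blocks $a^{(i)}:=a_{(i-1)L+1}\cdots a_{iL}$ ($1\le i\le L^{\ell-1}$), put $Q_i(w\mid c):=\sum_{\sigma(c)=w}\nu[\sigma]_i$ for $w\in A^{L}$, $c\in A$, and recall the column matrices $M_{i,j}(a,c)=\sum_{\sigma(c)_j=a}\nu[\sigma]_i$ from the proof of Theorem~\ref{teo:decayofcorrelation}. Self-similarity of the invariant state gives the one-step identity
\[
\mu_\nu[a]=\sum_{c\in A^{L^{\ell-1}}}\Big(\prod_{i=1}^{L^{\ell-1}}Q_i(a^{(i)}\mid c_i)\Big)\,\mu_\nu[c].
\]
I then interpose the measure $\hat\mu[a]:=\sum_{c}\big(\prod_i Q_i(a^{(i)}\mid c_i)\big)\mu[c]$, which differs from $\mu_\nu[a]$ only in the \emph{parent} law. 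Since $c\mapsto\prod_i Q_i(a^{(i)}\mid c_i)$ is a column-positive transition kernel (primitivity makes every block reachable), the convexity estimate used in the proof of Theorem~\ref{teo:projectiveapproximation} yields $\big|\log(\mu_\nu[a]/\hat\mu[a])\big|\le\max_{c}\big|\log(\mu_\nu[c]/\mu[c])\big|\le D_{\ell-1}$. Because the parent law in $\hat\mu$ is now the product measure $\mu[c]=\prod_{i'}\mu_\nu[c_{i'}]_{i'}$ and each factor $Q_i$ depends only on $c_i$, the sum factorizes exactly: $\hat\mu[a]=\prod_i R_i(a^{(i)})$ with $R_i(w):=\sum_c Q_i(w\mid c)\,\mu_\nu[c]_i$ the true block marginal of $\mu_\nu$, while $\mu[a]=\prod_i P_i(a^{(i)})$ with $P_i(w):=\prod_{j=1}^{L}\mu_\nu[w_j]_{(i-1)L+j}$ the product of the in-block one-marginals. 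Hence $\big|\log(\hat\mu[a]/\mu[a])\big|\le\sum_i\big|\log(R_i(a^{(i)})/P_i(a^{(i)}))\big|$, and the recursion follows once each summand is bounded by a constant $C_2$. The base case is $D_0=0$ because $\mu$ and $\mu_\nu$ share their one-marginals, so iterating gives $D_\ell\le C_2\sum_{m=1}^{\ell}L^{m-1}\le C_2 L^{\ell}/(L-1)$, whence $\rho(\mu,\mu_\nu)\le C_2L/(L-1)<\infty$.

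The main obstacle is exactly the uniform per-block estimate $\sup_{i}\max_{w\in A^{L}}\big|\log(R_i(w)/P_i(w))\big|\le C_2$, and this is where both hypotheses are consumed. The quantity $R_i/P_i$ measures the within-block correlation of $\mu_\nu$ conditioned on the parent symbol, and the difficulty is twofold: one must keep $R_i(w)$ and $P_i(w)$ strictly positive on the \emph{same} set of words, and one must do so uniformly over the infinitely many positions $i$. \emph{Bundle structure} provides the first: $S(c)=\prod_j S(c)_j$ forces the conditional block law $Q_i(\cdot\mid c)$ to have full product support, so that producibility of $w$ column-by-column coincides with producibility of $w$ as a whole, matching the supports of $R_i$ and $P_i$. \emph{Bounded dispersion} provides the second: $\sup_n\rho(\nu_1,\nu_n)<\infty$ makes every $\nu_i[\sigma]$ comparable to $\nu_1[\sigma]$, hence uniformly bounded below; combined with the uniform interiority of the one-marginals $\mu_\nu[\cdot]_i$ (which follows from Birkhoff's version of Perron--Frobenius as in Theorem~\ref{teo:decayofcorrelation}), one gets $R_i(w)\ge Q_i(w\mid c)\,\mu_\nu[c]_i\ge\varepsilon_0>0$ and $P_i(w)\ge\varepsilon_0'>0$ uniformly in $i,w$, giving the desired $C_2$. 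The remaining point to watch is purely quantitative: the per-level error $C_2L^{\ell-1}$ must be summed against the \emph{decreasing} block counts so that the geometric series converges to $O(L^\ell)$ rather than accumulating an extra $\log N$ factor; this is guaranteed by $L>1$, i.e. the ratio $1/L<1$.
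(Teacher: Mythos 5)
Your skeleton is sound, and in fact close in spirit to the paper's telescoping over $L$-adic levels: the reduction to words of length $L^\ell$, the interposed measure $\hat\mu$, the exact factorization $\hat\mu[a]=\prod_i R_i(a^{(i)})$, and the recursion $D_\ell\le D_{\ell-1}+C_2L^{\ell-1}$ are all correct. The gap is in the step you yourself call the main obstacle, the uniform per-block bound $\sup_i\max_{w\in A^L}\bigl|\log(R_i(w)/P_i(w))\bigr|\le C_2$, because both ingredients you offer for it fail. The ``uniform interiority of the one-marginals'' $\inf_i\min_{a}\mu_\nu[a]_i>0$ is \emph{not} a consequence of the hypotheses; it is false in general. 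Take $A=\{0,1\}$, $L=2$, let $S$ consist of all maps $\sigma$ with $\sigma(0)\in\{00,10\}$ and $\sigma(1)\in\{00,01,10,11\}$, and let $\nu$ be the uniform Bernoulli measure. Then $S$ has bundle structure ($S(0)=\{0,1\}\times\{0\}$, $S(1)=A\times A$), $\nu$ has dispersion zero, and $S$ is primitive (the symbol $1$ is a sliding symbol and $M_S$ is the all-ones matrix, so the proposition of Appendix~\ref{primitivity} applies). Here $q_\nu=(1/2,1/2)$, and position $2^k$ is reached from position $1$ by $k$ consecutive column-$2$ steps, with column-$2$ matrix $M_{1,2}$ whose rows are $(1,\ 1/2)$ and $(0,\ 1/2)$; this matrix is not primitive, and $\mu_\nu[1]_{2^k}=(M_{1,2}^kq_\nu)(1)=2^{-(k+1)}\to 0$. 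Consequently, along the blocks $i=2^{k-1}$ one has $R_i(11)=\tfrac14\,\mu_\nu[1]_{2^{k-1}}\to0$ and $P_i(11)=\tfrac12\,\mu_\nu[1]_{2^{k}}\to0$, so no uniform lower bounds $\varepsilon_0,\varepsilon_0'$ exist. Birkhoff--Perron--Frobenius as used in Theorem~\ref{teo:decayofcorrelation} concerns only the primitive matrix $M_{1,1}$ and gives no control on marginals transported along $L$-adic paths using the other columns. (Your support-matching point is also misplaced: positivity of $R_i$ and $P_i$ is automatic because $\mu_\nu$ has full support; and in any case bundle structure identifies column-wise with block-wise producibility only from a \emph{fixed} parent, whereas $P_i(w)>0$ allows different parents in different columns. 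The issue is never positivity, only uniformity in $i$, which is exactly what your argument does not deliver.)

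The gap is essential rather than a routine missing lemma, because your reduction concentrates the whole difficulty into comparing a mixture over a common parent, $R_i(w)=\sum_cQ_i(w\mid c)\,\mu_\nu[c]_i$, with the product of mixtures $P_i(w)=\prod_j\sum_cM_{i,j}(w_j,c)\,\mu_\nu[c]_i$, where the mixing weights $\mu_\nu[\cdot]_i$ can degenerate as $i\to\infty$ (as above); if the per-block error grows even logarithmically in $i$, your recursion only gives $D_\ell=O(\ell\,L^{\ell})$, which is not enough to conclude $\rho(\mu,\mu_\nu)<\infty$. The paper's proof never meets this problem because it compares kernels at \emph{fixed} parent: by bundle structure $\nu_n\{\sigma:\sigma(b)=a\}$ and $\prod_j\nu_n\{\sigma:\sigma(b)_j=a_j\}$ have identical supports, by bounded dispersion their ratio is controlled by the finitely many position-$1$ ratios uniformly in $n$, and an entrywise bound between kernels propagates through \emph{any} mixture without lower bounds on the weights; these constants are then telescoped through the levels much as in your recursion. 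To repair your argument you would have to move the comparison inside the sum over the parent in this way (and then still handle the decoupling of the columns after mixing, the one point your $R_i$ versus $P_i$ formulation leaves fully exposed), rather than rely on uniform positivity of the marginals of $\mu_\nu$.
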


\begin{proof}
Let us start by fixing some notations. 
First, let $q_\nu\in (0,1)^A$ 
denote the probability vector corresponding to one-marginal of $\mu_\nu$
at position 1. For each $\ell\in {\mathbb N}$, 
$M^{(\ell)}:A^{L^{\ell}}\times A^{L^{\ell+1}}\to [0,1]$
is such that
\[M^{(\ell)}(a,b):=\sum_{s(b)=a} \nu[s]=\prod_{n=1}^{L^\ell}
\nu_n\left\{\sigma\in S:\ \sigma(b_n)=a_{(n-1)L+1}^{nL} \right\}.\]
Clearly $\sum_{b\in A^{L^\ell}}\mu_\nu[a]=M^{(\ell)}(a,b)\mu_\nu[b]$ 
for each $a\in A^{L^\ell+1}$ 
and $b\in A^{L^\ell}$.

\medskip \noindent For each $n\in{\mathbb N}$ and $1\leq j\leq L$, 
let $M_{n,j}:A\times A\to [0,1]$ be given by
$M_{n,j}(a,c)=\sum_{\sigma(c)_j=a}\nu[\sigma]_n$. 
With this define 
$\bar{M}^{(\ell)}:A^{L^{\ell}}\times A^{L^{\ell+1}}\to [0,1]$ by
\[
\bar{M}^{(\ell)}(a,b)
      :=\prod_{n=1}^{L^\ell}\prod_{j=1}^L M_{n,j}(a_{(n-1)L+j},b_n)
      =\prod_{n=1}^{L^\ell} \left(\prod_{j=1}^L
      \nu_n\left\{\sigma\in S: \ \sigma(b_n)_j=a_{(n-1)L+j}\right\}
                         \right).
\]
It is not difficult to verify that 
$\mu[a]=\sum_{b\in A^{L^\ell}}\bar{M}_1^{(\ell)}(a,b)\mu[b]$
for each $a\in A^{L^{\ell+1}}$ and $b\in A^{L^\ell}$. 

\medskip\noindent Since $S$ has bundle structure, then
$M_k^{(\ell)}(a,b)\neq 0$ if and only if $\bar{M}_k^{(\ell)}(a,b)\neq 0$. 
The proof of the proposition is base on the following claim: 
\begin{itemize}
\item[] There exists $C>0$ such that 
$\bar{M}^{(\ell)}(a,b) \lessgtr e^{\pm L^\ell C}\ M^{(\ell)}(a,b)$
for each $\ell\in {\mathbb N}$, 
$a\in A^{L^{\ell+1}}$ and $b\in A^{L^\ell}$.
\end{itemize}
The claim is a direct consequence of bounded dispersion. For this, it is enough to
observe that for each $b\in A$ and $a\in A^L$,
\begin{align*}
\frac{\prod_{j=1}^L
      \nu_n\left\{\sigma\in S: \ \sigma(b)_j=a_j\right\}
 }{\nu_n\left\{\sigma\in S:\ \sigma(b)=a \right\}}
&\lessgtr
 \frac{\prod_{j=1}^L
      \nu_1\left\{\sigma\in S: \ \sigma(b)_j=a_j\right\}
 }{\nu_1\left\{\sigma\in S:\ \sigma(b)=a \right\}}e^{\pm (L+1)\rho(\nu_1,\nu_n)}\\
&\lessgtr
 \frac{\prod_{j=1}^L
      \nu_1\left\{\sigma\in S: \ \sigma(b)_j=a_j\right\}
 }{\nu_1\left\{\sigma\in S:\ \sigma(b)=a \right\}}e^{\pm (L+1)\sup_n\rho(\nu_1,\nu_n)}  
\end{align*}
Hence, for each $\ell\in {\mathbb N}$,
$L^{\ell-1}\leq N <L^{\ell}$ and $a\in A^{N}$, we have
\begin{align*}
\frac{\mu_\nu[a]}{\mu[a]}
=\frac{\sum_{b\in A^{L^\ell}: b\sqsupseteq a}\mu_\nu[b]
     }{\sum_{b\in A^{L^\ell}: b\sqsupseteq a}\mu[b]}&=
\frac{
\sum_{b\in A^{L^\ell}: b\sqsupseteq a}
\left(\left(\prod_{k=1}^{\ell-1} M^{(k)}\right)q_\nu\right)(b)
    }{
\sum_{b\in A^{L^\ell}:  b\sqsupseteq a}
\left(\left(\prod_{k=1}^{\ell-1}\bar{M}^{(k)}\right)q_\nu\right)(b)}\\
 &\lessgtr 
 e^{\pm (L+1)\sup_n\rho(\nu_1,\nu_n)\sum_{k=1}^{\ell-1}L^k}
 =e^{\pm (L+1)\,L^\ell\sup_n\rho(\nu_1,\nu_n)},
\end{align*}
therefore $\rho(\mu,\mu_\nu)\leq L(L+1)\sup_n\rho(\nu_1,\nu_n) <\infty$.

\end{proof}

\bigskip
\bibliographystyle{plain}

\begin{thebibliography}{99}

\bibitem{1995BuldyrevEtAl} A. L. Buldyrev, A. L. Goldberger, S. Havlin, R. N. Mantegna, 
M. E. Matsa, C.-K. Peng, M. Simons, H. E. Stanley,   ``Long-range correlation properties 
of coding and noncoding DNA sequences: GenBank Analysis'', 
\emph{Physical Reveview E} {\bf 51} (1995) 5084--5091.

\bibitem{2003Cavazos} R. Cavazos-Cadena, ``An alternative derivation of 
Birkhoff's formula for the contraction coefficient of a positive matrix'', 
\emph{Linear Algebra and its Applications} {\bf 375} (2003) 291--297.

\bibitem{1969Dyson} F. Dyson, ``Existence of a Phase-Transition
in a One-Dimensional Ising Ferromagnet'', \emph{Communications in Mathematical 
Physics} {\bf 12} (1968) 91--107.

\bibitem{1968Dobrushin} R. L. Dobrushin, ``The description of a random field by 
means of conditional probabilities and the conditions governing its regularity'' 
\emph{Theory of Probability and Its Applications} {\bf 13} (1968) 
197--224.

\bibitem{1988GeorgiiBook} H.-O. Georgii, ``Gibbs measures and Phase Transitions'', 
{\it Walter de Gruyter}, 1988.

\bibitem{1989GodrecheLuck} C. Godr\`eche and J. M. Luck,  ``Quasiperiodicity and 
Randomness in Tilings of the Plane'', \emph{Journal of Statistical Physics} 
{\bf 55} (1989) 1--28.

\bibitem{1998KellerBook} G. Keller, ``Equilibrium States in Ergodic Theory'',
\emph{Cambridge University Press}, 1998.

\bibitem{2012Koslicki} D. Koslicki, ``Substitution Markov Chains with Applications to 
Molecular Evolution'', Ph. D. Dissertation, Pennsylvania State University, 2012.

\bibitem{1989Li} W. Li, ``Spatial l/f spectra in open dynamical systems'', 
\emph{ Europhysics Letters} {\bf 10} (1989) 395--400.

\bibitem{1991Li} W. Li, ``Expansion-modification systems: A model for spatial 
$1/f$ spectra'', \emph{Physical Review A} {\bf 43} (1991) 5240--5260.  

\bibitem{1992LiKaneko} W. Li and K. Kaneko, ``Long-range correlation and partial 
$1/f$ a spectrum in a noncoding DNA sequence'', \emph{Europhysics Letters} 
{\bf 17} (1992) 655--660 .

\bibitem{1997Li} W. Li, ``The study of correlation structures of DNA sequences: 
a critical review'', \emph{Computers \& Chemistry }{\bf 21} (1997) 257--271 .

\bibitem{1998Malyshev} Malyshev, V. A., ``Random grammars''. 
\emph{Russian Mathematical Surveys} {\bf 53} (1998) 345--370.

\bibitem{1992PengEtAl} C.-K. Peng, S. V. Buldyrev, A. L. Goldberger, S. Havlin, 
F. Sciortino, M. Simons, and H. E. Stanley, ``Long-range correlations in 
nucleotide sequences'', \emph{Nature} {\bf 356} (1992) 168--179.

\bibitem{1981Peyriere} J. Peyri\`ere, ``Substitutions al\'eatoires it\'er\'ees'',
\emph{S\'eminaire de Th\'eorie des Nombres}, (1980-1981), Expos\'e no. 17.

\bibitem{2010QueffelecBook} M. Queff\'elec. ``Substitution Dynamical 
Systems-Spectral Analysis'', 2d Edition, \emph{Lecture Notes in Mathematics 
1294, Springer} 2010.

\bibitem{2011Rocha&al} A. V. Rocha, A. B. Simas and A. Toom, ``Substitution
Operators'', \emph{Journal of Statistical Physics} {\bf 143} (2011) 585--618.

\bibitem{2004RuelleBook} D. Ruelle, ``Thermodynamic Formalism'' 
\emph{Cambridge University Press}, 2004.

\bibitem{2013SalgadoUgalde} R. Salgado-Garc\'\i a and E. Ugalde, ``Exact scaling in 
the expansion-modification system'', \emph{Journal of Statistical Physics} 
{\bf 153} (2013) 842--863.

\bibitem{2006SenetaBook} E. Seneta, ``Non-negative Matrices and Markov Chains'', 
\emph{Springer} 2006.

\bibitem{1979Simon} B. Simon, ``A Remark on Dobrushin's Uniqueness Theorem'', 
\emph{Communications in Mathematical Physics} {\bf 68} (1979) 183--185.

\bibitem{1975Sinai} Ya. G. Sinai, ``Self-similar probability distributions'', 
\emph{The Theory of Probability and its Applications} {\bf 21} (1976) 64--80.

\bibitem{2015TrejoUgalde} L. Trejo-Valencia and E. Ugalde, "Projective distance 
and g-measures", \emph{Discrete and Continuous Dynamical Systems B} {\bf 20} 
(2015) 3565--3579.
\end{thebibliography}

\end{document}